\newtheorem{problem}{Problem}
\newtheorem{remark}{Remark}
\newcommand{\Sol}{\mathit{Sol}}
\newcommand{\Uh}{\widehat{U}}
\newcommand{\avoid}{\mathit{Avoid}}
\newcommand{\ball}{\Omega}
\def\reals{\mathbb{R}}
\def\ints{\mathbb{Z}}
\def\nats{\mathbb{N}}
\def\goal{\mathit{Goal}}
\def\init{\mathsf{in}}
\def\xb{\boldsymbol x}
\def\ub{\boldsymbol u}
\def\etab{\boldsymbol\eta}
\def\xbb{\bar{\boldsymbol x}}
\def\ubb{\bar{\boldsymbol u}}
\def\Xb{\bar{X}}
\def\Ub{\bar{U}}
\def\ds{\mathcal D}
\def\NN{\mathcal N}
\def\Xro{X_{\rho}}
\def\Xrob{\bar{X}_{\rho}}
\def\eFb{\boldsymbol e_F}
\def\eBb{\boldsymbol e_B}
\def\Post{{\mathrm{Post}}}
\def\Pre{{\mathrm{Pre}}}
\def\M{{\mathcal{M}}}
\def\Iu{{\mathcal I_u}}
\def\Ixi{{\mathcal I_{x,i}}}
\newcommand{\ubar}[1]{\underaccent{\bar}{#1}}
\newtheorem{example}{Example}
\newtcolorbox{resp}[1][]{%
	enhanced jigsaw,%
	colback=gray!5!white,%
	colframe=gray!80!black,%
	size=small,%
	boxrule=1pt,%
	halign title=flush center,%
	coltitle=black,%
	breakable,%
	drop shadow=black!50!white,%
	attach boxed title to top left={xshift=1cm,yshift=-\tcboxedtitleheight/2,yshifttext=-\tcboxedtitleheight/2},%
	minipage boxed title=3cm,%
	boxed title style={%
		colback=white,%
		size=fbox,%
		boxrule=1pt,%
		boxsep=2pt,%
		underlay={%
			\coordinate (dotA) at ($(interior.west) + (-0.5pt,0)$);
			\coordinate (dotB) at ($(interior.east) + (0.5pt,0)$);
			\begin{scope}[gray!80!black]
				\fill (dotA) circle (2pt);
				\fill (dotB) circle (2pt);
			\end{scope}
		}%
	},%
	#1%
}
\def\blfootnote{\xdef\@thefnmark{}\@footnotetext}
\begin{document}
	\title{Neural Abstraction-Based Controller Synthesis and Deployment} %for Reach-avoid Specifications}
	%\author{}

	\author{Rupak Majumdar}
	%\authornote{Both authors contributed equally to this research.}
	\email{rupak@mpi-sws.com}
	\orcid{0000-0003-2136-0542}
	\affiliation{%
		\institution{MPI-SWS}
		\city{Kaiserslautern}
		\country{Germany}
	}

	\author{Mahmoud Salamati}
	%\authornotemark[1]
	\email{msalamati@mpi-sws.com}
	\affiliation{%
		\institution{MPI-SWS}
		\city{Kaiserslautern}
		\country{Germany}
	}
	\orcid{0000-0003-3790-3935}
	
	\author{Sadegh Soudjani}
	\email{Sadegh.Soudjani@newcastle.ac.uk}
	\affiliation{%
		\institution{Newcastle University}
		\city{Newcastle}
		\country{United Kingdom}
	}
	\orcid{0000-0003-1922-6678}

	%%
	%% By default, the full list of authors will be used in the page
	%% headers. Often, this list is too long, and will overlap
	%% other information printed in the page headers. This command allows
	%% the author to define a more concise list
	%% of authors' names for this purpose.
	\renewcommand{\shortauthors}{Majumdar et al.}

	\begin{abstract}
		% !TEX root = main.tex
Abstraction-based techniques are an attractive approach for synthesizing correct-by-construction controllers to satisfy high-level temporal requirements. 
A main bottleneck for successful application of these techniques is the memory requirement, both during controller synthesis (to store the abstract transition relation)
and in controller deployment (to store the control map).

We propose memory-efficient methods for mitigating the high memory demands of the abstraction-based techniques using \emph{neural network representations}.
To perform synthesis for reach-avoid specifications, we propose an on-the-fly algorithm that
relies on compressed neural network representations of the forward and backward dynamics of the system. 
In contrast to usual applications of neural representations, our technique maintains soundness of the end-to-end process.
To ensure this, we correct the output of the trained neural network such that the corrected output representations are sound with respect to the finite abstraction.
For deployment, we provide a novel training algorithm to find a neural network representation of the synthesized controller 
and experimentally show that the controller can be correctly represented as a combination of a neural network and 
a look-up table that requires a substantially smaller memory. 

We demonstrate experimentally that our approach significantly reduces the memory requirements of abstraction-based methods.
We compare the performance of our approach with the standard abstraction-based synthesis on several models.
For the selected benchmarks, 
our approach reduces the memory requirements respectively for the synthesis and deployment by a factor of $1.31\times 10^5$ and  $7.13\times 10^3$ on average, and up to $7.54\times 10^5$ and $3.18\times 10^4$. %\textcolor{red}{check these numbers.}
Although this reduction is at the cost of increased off-line computations to train the neural networks,
all the steps of our approach are parallelizable and can be implemented on machines with higher number of processing units to reduce the required computational time.
\blfootnote{This article appears as part of the ESWEEK-TECS special issue and was presented in the International Conference on Embedded Software (EMSOFT), 2023.} 

	\end{abstract}

%%
%% The code below is generated by the tool at http://dl.acm.org/ccs.cfm.
%% Please copy and paste the code instead of the example below.
%%
\begin{CCSXML}
	<ccs2012>
	<concept>
	<concept_id>10003752.10003790.10011119</concept_id>
	<concept_desc>Theory of computation~Abstraction</concept_desc>
	<concept_significance>300</concept_significance>
	</concept>
	<concept>
	<concept_id>10010147.10010148.10010164.10010166</concept_id>
	<concept_desc>Computing methodologies~Representation of mathematical functions</concept_desc>
	<concept_significance>300</concept_significance>
	</concept>
	<concept>
	<concept_id>10010147.10010178.10010213.10010214</concept_id>
	<concept_desc>Computing methodologies~Computational control theory</concept_desc>
	<concept_significance>500</concept_significance>
	</concept>
	</ccs2012>
\end{CCSXML}

\ccsdesc[300]{Theory of computation~Abstraction}
\ccsdesc[300]{Computing methodologies~Representation of mathematical functions}
\ccsdesc[500]{Computing methodologies~Computational control theory}

%%
%% Keywords. The author(s) should pick words that accurately describe
%% the work being presented. Separate the keywords with commas.
\keywords{abstraction-based control, neural networks, compact representations, formal synthesis}

%\received{20 February 2007}
%\received[revised]{12 March 2009}
%\received[accepted]{5 June 2009}

%%
%% This command processes the author and affiliation and title
%% information and builds the first part of the formatted document.
\maketitle

% !TEX root = main.tex
\section{Introduction}
Designing controllers for safety-critical systems with formal correctness guarantees has been studied extensively in the past two decades, 
with applications in robotics, power systems, and medical devices \cite{lee2016introduction,alur2015principles,kwiatkowska2005probabilistic}. 
Abstraction-based controller design (ABCD) has emerged as an approach that can algorithmically construct
a controller with formal correctness guarantees on systems with non-linear dynamics and 
bounded adversarial disturbances \cite{Tabuada2009book,belta2017formal,reissig2016feedback,rupak2020ouputfeedback,stanly2020}
and complex behavioral specifications.
ABCD schemes construct a \emph{finite abstraction} of a dynamical system that has continuous state and input spaces, and solve 
a two-player graph game on the abstraction.
When the abstraction is related to the original system through an appropriate behavioral relation (alternating bisimulation or feedback refinement \cite{reissig2016feedback}),
the winning strategy of the graph game can be refined to a controller for the original system. 
Finite abstractions can be computed analytically when the system dynamics are known and certain Lipschitz continuity properties hold. 
Even when the system dynamics are unknown, one can use data-driven methods to learn finite abstractions that are correct with respect to a 
given confidence \cite{Milad:2022,Arcak:2021, MAKDESI202149}. 

A main bottleneck of ABCD is the memory requirement, both in representing the finite abstract transition relation and in representing the controller.
First, the state and input spaces of the abstraction grow exponentially with the system and input dimensions, respectively, 
and the size of the abstract transition relation grows quadratically with the abstract states and linearly with the input states.
While symbolic encodings using BDDs can be used, in practice, the transition relation very quickly exceeds the available RAM.
Memory-efficient methods sometimes exploit the analytic description of the system dynamics or growth bounds \cite{Reissig:2022,Girard:2022,Rungger:2022}, but these techniques
are not applicable when the finite abstractions are learned directly from sampled system trajectories, 
or when a compact analytical expression of the growth bound is not available. 
Second, the winning strategy in the graph game is extracted as a look-up table mapping winning states to one or more available inputs.
Thus, the controller representation is also exponential in the system dynamics.
Such controllers cannot be deployed on memory-constrained embedded systems.

In this work, we address the memory bottleneck using approximate, compressed, representations of the transition relation and the controller map using neural networks.
We learn an approximate representation of the abstract transition relation as a neural network with a fixed architecture.
In contrast to the predominant use of neural networks to learn a generalization of an unknown function through sampling, we train the network on the entire data set
(the transition relation or the controller map) offline.
We store the transitions on disk, and train our networks in batch mode by bringing blocks of data into the RAM as needed. 
The trained network is small and fits into RAM.
Since the training of the network minimizes error but does not eliminate it, we apply a \emph{correction} to the output to ensure that the representation is 
\emph{sound} with respect to the original finite abstraction, i.e., every trajectory in the finite abstraction is preserved in the compressed representation.
We propose an on-the-fly synthesis approach that works directly on the corrected representation of the forward and backward dynamics of the system.
Although we present our results with respect to reach-avoid specifications, our approach can be generalized to other classes of properties and problems 
(e.g., linear temporal logic specifications \cite{baier2008principles}) in which the solution requires the computation of  the set of predecessors and successors in the underlying transition system.

Similarly, we store the winning strategy as a look-up table mapping states to sets of valid inputs on disk and propose a novel training algorithm to find a 
neural network representation of the synthesized controller.
The network is complemented with a look-up table that provides ``exceptions'' in which the network deviates from the winning strategy.
We experimentally demonstrate that a controller can be correctly represented as a combination of a neural network and a look-up table that requires a substantially smaller memory than
the original representation.

An important aspect of our approach is that, instead of using neural networks for learning an unknown data distribution, we train them over the entire data domain. 
Therefore, in contrast to many other applications wherein neural networks provide function representation and generalization over the unseen data, 
we are able to provide \emph{formal soundness guarantees} for the performance of the trained neural representations over the whole dataset.
 
Our compression scheme uses additional computation to learn a compressed representation and avoid the memory bottleneck.
In our implementation, the original relations are stored on the hard drive and
data batches are loaded sequentially into the RAM to perform the training. 
Hard drives generally have much higher memory sizes compared to the RAM, but reading data from the hard drive takes much longer. 
However, data access during training is predictable and we can perform prefetching to hide the latency. 
During the synthesis, the trained corrected neural representations fit into the RAM.
In contrast, a disk-based synthesis algorithm does not have predictable disk access patterns and is unworkable.
Similarly, the deployed controller only consists of the trained compact representation and (empirically) a small look-up table, 
which can be loaded into the RAM of the controlling chip for the real-time operation of the system.
 
We evaluate the performance of our approach on several examples of different difficulties and show that it is effective in reducing the memory requirements at 
both synthesis and deployment phases. 
For the selected benchmarks, our method reduces the space requirements of synthesis and deployment respectively by factors of $1.31\times 10^5$ and  $7.13\times 10^3$ in average, and up to $7.54\times 10^5$ and $3.18\times 10^4$,
%\textcolor{red}{Update these numbers, compare with what reported in the abstract.}
 compared to the abstraction-based method that requires storing the full transition system. Moreover, we empirically show that, unlike other encodings, the memory requirement of our method is not affected by the system dimension on the considered benchmarks.

%\vspace{-0.1cm}
In summary, our main contributions are:
 \begin{itemize}
 	\item Proposing a novel and sound representation scheme for compressing finite transition systems using the expressive power of neural networks;
 	\item Proposing a novel on-the-fly controller synthesis method using the corrected neural network representations of forward and backward dynamics;
 	\item Proposing an efficient scheme for compressing the controller computed by abstraction-based synthesis methods;
 	\item Demonstrating significant reduction in the memory requirements by orders of magnitude through a set of standard benchmarks. \footnote{
 		Our implementations are available online at \url{https://github.com/msalamati/Neural-Representation}.
 	}
 \end{itemize}
The rest of this paper is organized as follows. After a brief discussion of related works, we give a high-level overview of our proposed approach in Subsec.~\ref{subsec:overview}.
The preliminaries and the problem statements are given in Sec.~\ref{sec:prelims}.
We provide the details of our synthesis and deployment algorithms in Sec.~\ref{sec:synthesis} and \ref{sec:deployment}, respectively.
In Sec~\ref{sec:experiments}, we provide experimental results of applying our approach to several examples.
We state the concluding remarks in Sec.~\ref{sec:conclusion}.

\subsection{Related Work}
%
%\noindent\textbf{Synthesis}\\
%\textcolor{blue}{Most of the existing memory efficient synthesis approaches are proposed for systems whose dynamics are described by look-up tables, e.g., q-tables or finite transition systems. Below, we give an overview of the existing literature in the area of memory efficient synthesis.}

\emph{Synthesis via reinforcement learning.}
The idea of using neural networks as function approximators to represent tabular data for synthesis purposes has been used in different fields such as reinforcement learning (RL) literature and aircraft collision avoidance system design. RL algorithms try to find an optimal control policy by iteratively guiding the interaction between the agent and the environment modeled as a Markov decision process \cite{Sutton1998}. When the space of the underlying model is finite and small, q-tables are used to represent the required value functions and the policy. When the space is large and possibly uncountable, such finite q-tables are replaced with neural networks as function approximators. Convergence guarantees that hold with the q-table representation \cite{Bertsekas99} are not valid for non-tabular setting \cite{Boyan1994,Bryant:1986,Tsitsiklis1997}. A similar behavior is observed in our setting: we lose the correctness guarantees in our approach without correcting the output of the neural network representations of the transition systems and the tabular controller.

\smallskip

\emph{Neural-aided controller synthesis.} Constructing neural network representations of the dynamics of the control system and using them for synthesis is studied in specific application domains including the design of unmanned airborne collision avoidance systems \cite{Julian:2016}. The central idea of \cite{Julian:2016} is to start from a large look-up table representing the dynamics, train a neural network on the look-up table, and use it in the dynamic programming for issuing horizontal and vertical advisories. Several techniques are used to reduce the storage requirement since the obtained score table---that is the table mapping every discrete state-input pair into the associated score---becomes huge in size (hundreds of gigabytes of floating numbers). Since simple techniques such as down sampling and block compression \cite{Kochenderfer:2013}, %and exploiting the natural symmetry of the score table \cite{Julian:2016}
 are unable to achieve the required storage reduction, Julian et al. have shown that deep neural networks can successfully approximate the score table \cite{Julian:2019}. However, as in the RL controller synthesis, there is no guarantee that the control input computed using the neural representation matches the one computed using the original score table. In contrast, our corrected neural representations are guaranteed to produce formally correct controllers.

\emph{Reactive synthesis.} 
	Binary decision diagrams (BDDs) are used extensively in the reactive synthesis literature to represent the underlying transition systems %\textcolor{red}{add citation}
	\cite{Roy:2011, Hsu:2018}. While BDDs are compact enough for low-order dynamical systems, recent synthesis tools such as SCOTS v2.0 %\textcolor{red}{add citation for this}
	\cite{Rungger2016scots} have already migrated into the non-BDD setting in order to avoid the large runtime overheads. In fact, motivated by reducing the required memory foot print, the current trend is to synthesize controllers in a non-BDD on the fly to eliminate the need for storing the transition system \cite{pFaces:2019,Khaled:2019,lavaei2020amytiss,Reissig:2022,Girard:2022,Rungger:2022}. These memory-efficient methods exploit the analytic description of the system dynamics or growth bounds. In contrast, our technique
is applicable also to the case where the finite abstractions are learned directly from the sampled system trajectories, 
i.e., when no compact analytical expression of the dynamics and growth bounds are available.

%\textcolor{blue}{\noindent\textbf{Deployment.}
%A number of different techniques have been used to compress the deployable controllers. Below, we provide a list for the existing memory efficient controller deployment methods.}

\emph{Verifying systems with neural controllers.}
An alternative approach developed for safety-critical systems is to use neural networks as a representation of the controller and learn the controller using techniques such as reinforcement learning and data-driven predictive control \cite{Dutta:2019,Tran:2019}.
In this approach, the controller synthesis stage does not provide any safety guarantee on the closed loop system, i.e., on the feedback connection of the neural controller and the physical system. Instead, the safety of the closed-loop system is verified a posteriori for the designed controller.
%
%However, the resulted closed loop system---that is created by feedback connection of the neural controller and the physical plant---cannot certify the required safety guarantees. To provide performance guarantees, a popular approach is to verify safety of the closed loop system.
Ivanov et al. have considered dynamical systems with sigmoid-based neural network controllers, used the fact that
sigmoid is the solution to a quadratic differential equation to transform the composition of the system and the neural controller into an equivalent hybrid system, and studied reachability properties of the closed-loop system by utilizing existing reachability computation tools for hybrid systems \cite{Ivanov:2019}.
Huang et al. have considered dynamical systems with Lipschitz continuous neural controllers and used Bernstein polynomials for approximating the input-output model of the neural network \cite{Huang:2019}.
Development of formal verification ideas for closed-loop systems with neural controllers has led into emergence of dedicated tools such as NNV~\cite{Hoang-Dung:2020} and POLAR~\cite{Huang:2022}. While these methods provide guarantees on closed-loop control system with neural controllers, they can only consider \emph{finite horizon} specifications for a given set of initial states. In contrast, we consider controllers that are synthesized for \emph{infinite horizon} specifications.

\emph{Minimizing the memory foot print for symbolic controllers.} Girard et al. have proposed a method to reduce the memory needed to store safety controllers by determinizing them, i.e., choosing one control input per state such that an algebraic decision diagram (ADD) representing the control law is minimized \cite{GIRARD:2012,Girard:2022}. Zapreev et al. have provided two methods based on greedy algorithms and symbolic regression to reduce the redundancy existing in the controllers computed by the abstraction-based methods \cite{ZAPREEV:2018}. Both of the ADD scheme in \cite{GIRARD:2012,Girard:2022} and the BDD-based scheme in \cite{ZAPREEV:2018} have the capability to determinize the symbolic controller and reduce its memory foot print. However, the computed controller still suffers from the additional runtime overhead of the ADD/BDD encoding.
%\textcolor{red}{Can we make a better reasoning here?}
Further, as mentioned by the authors of \cite{ZAPREEV:2018}, their regression-based method is not able to represent the original controller with high accuracy. In contrast, our tool produces real-valued representations for symbolic controllers and can (additionally) be computed on top of the simplified version found by either of the methods proposed in \cite{Girard:2022,ZAPREEV:2018}.
%significantly mitigate the memory requirement for deployment of the controllers obtained from the ABCD.

\smallskip

\emph{Compressed representations for model predictive controllers (MPCs).}
Hertneck et al. have proposed a method to train an approximate neural controller representing the original robust (implicit) MPC satisfying the given specification \cite{Hertneck:2018}. While reducing the online computation time is the main motivation in implicit MPCs, minimizing the memory foot print is the main objective in explicit MPCs. Salamati et al. have proposed a method which is based on solving an optimization to compute a memory-optimized controller with mixed-precision coefficients used for specifying the required coefficients \cite{Salamati:2019}. Our method considers a different class of controllers that can fulfill infinite horizon temporal specifications.

 \subsection{Overview of the Proposed Approach}\label{subsec:overview}
 In this subsection, we provide a high-level description of our approach for both synthesis and deployment.

 \begin{figure*}[t]
	\vspace{-.2cm}
	\centering
	\includegraphics[scale=.3]{./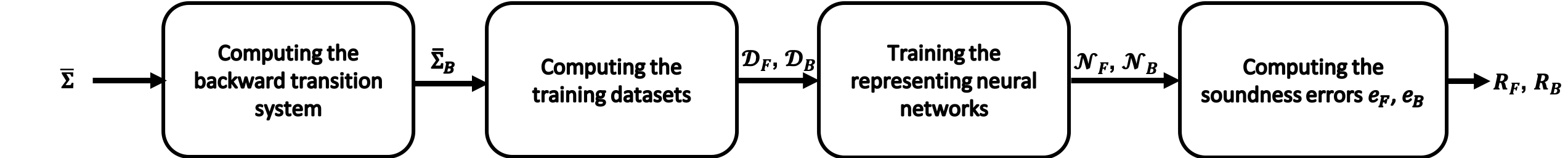}
	\caption{Graphical description of the proposed scheme for compressing finite abstractions}
	\label{fig:TS_compression_diagram}
\end{figure*}
 \smallskip
 \noindent\textbf{Corrected neural representations.}\
 Fig.~\ref{fig:TS_compression_diagram} %(top) 
 gives a pictorial description of the steps for computing a corrected neural network representation. Given a finite abstraction $\bar\Sigma$ that corresponds to the forward dynamics of the system and stored on the hard drive, we first compute the transition system $\bar\Sigma_B$ corresponding to the backward dynamics. Next, we extract the input-output training datasets  $\ds_F$ and $\ds_B$ respectively from the forward and backward systems, and store them on the hard drive. Each data point contains one state-input pair and the characterization of $\ell_\infty$ ball for the corresponding reachable set. We train two neural networks $\NN_F$ and $\NN_B$ such that they represent compressed input-output surrogates for the datasets $\ds_F$ and $\ds_B$, respectively. Finally, we compute the \emph{soundness errors} $\eFb$ and $\eBb$ which correspond to the difference between the output of $\NN_F$ and $\NN_B$ and the respective values in $\ds_F$ and $\ds_B$, calculated over all of the state-input pairs. We use the computed errors $\eFb$ and $\eBb$ in order to construct the corrected neural representations $R_F$ and $R_B$. We will get memory savings by using $R_F$ and $R_B$ instead of $\bar\Sigma$ and $\bar\Sigma_B$, respectively.

\begin{figure}[t]
	\centering
	\includegraphics[scale=.35]{./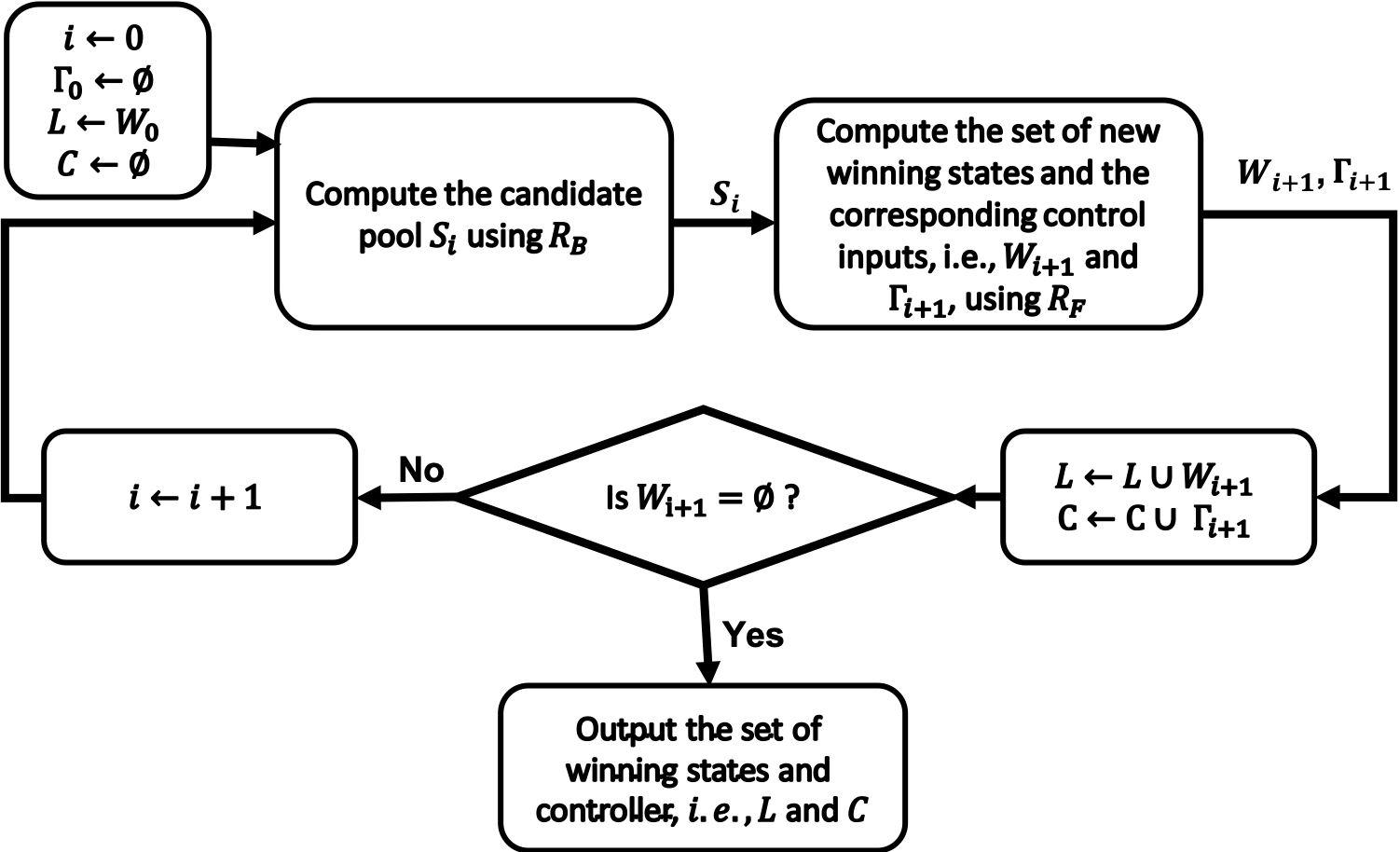}
	\caption{Graphical description of the proposed synthesis scheme}
	\vspace{-.5cm}
	\label{fig:synthesis_diagram}	
\end{figure}

\smallskip
 \noindent\textbf{Synthesis.}\
 Fig.~\ref{fig:synthesis_diagram} gives a pictorial description of our proposed synthesis algorithm for a reach-avoid specification with the target set  $\goal$ and obstacle set $\avoid$ as subsets of the state space. Let $W_0\subseteq \Xb$ represents a discrete under-approximation of the target set $\goal$. %The synthesis process initiates with taking the target set $W_0=\set{\bar x\mid|\bar x-x_f|\geq\eta_x/2}$ as input. 
We initialize the winning set as $L=W_0$, the controller as $C=\emptyset$, and the set of state-input pairs that must be added to the controller as $\Gamma_{0}=\emptyset$. In each iteration, we compute the set of new states that belong to the winning set and update the controller accordingly, until no new state is added to $L$. To this end, we first use $R_B$ and its corresponding soundness error $e_B$ to compute a set of candidates $S_i$ out of which some belong to $L$ and it is guaranteed that there will be no winning state outside of $S_{i}$ in the $i^{th}$ iteration. %Let $A\subseteq \Xb$ represents a discrete over-approximation for the set obstacles, i.e., $\avoid$. 
We use $R_F$ and its corresponding soundness error $e_F$ to compute the set of new winning states $W_{i+1}\subseteq S_i$. We also compute the set of control inputs for every new winning state and compute the corresponding set of state-input pairs $\Gamma_{i+1}$ that must be added to the controller.
%which is a subset of $S$ that belongs to $L$ and is separated from $L\cup A$, where  %At the $i^{th}$ iteration, we denote this set by $W_i$.
 Finally, if $W_{i}=\emptyset$, we terminate the computations as we already have computed the winning set $L$ and the controller $C$. Otherwise, we add the new winning set of states and state-input pairs, respectively, into the overall winning set ($L\leftarrow L\cup W_{i+1}$) and the controller ($C\leftarrow C\cup\Gamma_{i+1}$), and repeat the steps in the next iteration. %Note that the number of computations required by our synthesis method is much less compared to the on-the-fly approach proposed in \cite{pFaces:2019}, because they take the full transition system as the candidate pool which is much larger than $S$.

\begin{figure*}[t]
	\centering
	\includegraphics[scale=.3]{./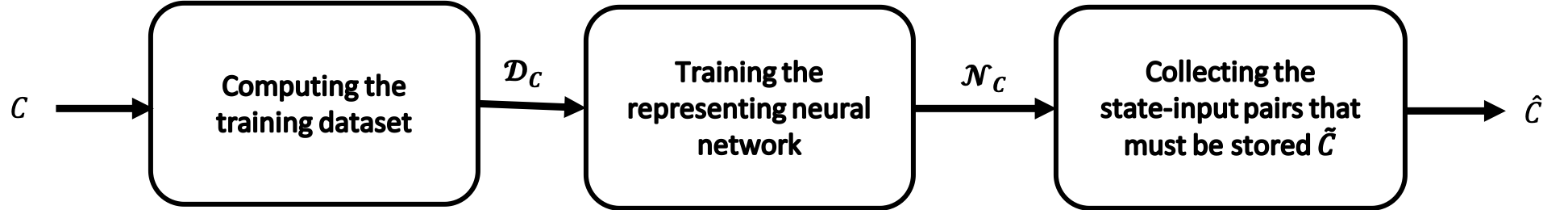}
	\caption{Graphical description of the proposed scheme for compressing  %finite abstraction (\textbf{top}) and 
		controllers %(\textbf{bottom})
	}
	\label{fig:cont_compression_diagram}
\end{figure*}

\smallskip
\noindent\textbf{Deployment.}\
Fig.~\ref{fig:cont_compression_diagram} %(bottom)  
shows our method for compressing controllers that are obtained from abstraction-based approaches. In the first step, we collect the training dataset $\ds_C$ and reformat it to become appropriate for our specific formulation of a classification problem. Each data point contains one state and an encoding of the corresponding set of control inputs.
% (See Fig.~\ref{fig:all_NN_cnfigs} (right)).
We then train a neural network $\NN_C$ on the data with the loss function designed for this specific classification problem. Finally, we find all the states at which the output label generated by $\NN_C$ is invalid, and store the corresponding state-input pair in a look-up table, denoted by $\hat C$. We experimentally show that $\hat C$ only contains a very small portion of state-input pairs. 
\begin{comment}
\begin{figure}[t]
	\centering
	\includegraphics[scale=.5]{./figures/cont_compression_diagram.png}
	\caption{Graphical description of the proposed scheme for compressing the controller}
	\label{fig:cont_compression_diagram}	
\end{figure}
\end{comment}

% !TEX root = main.tex

\section{Preliminaries}\label{sec:prelims}
%\smallskip
%\noindent\textbf{Notation.}\
\subsection{Notation}
We denote the set of integer numbers and natural numbers including zero by $\ints$ and $\nats$, respectively.
We use the notation $\reals$ and $\reals_{>0}$ to denote respectively the set of real numbers and the set of positive real numbers.
We use superscript $n>0$ with $\reals$ and $\reals_{>0}$ to denote the Cartesian product of $n$ copies of $\reals$ and $\reals_{>0}$ respectively. For a vector $\boldsymbol a \in \reals^d$, we denote its $i^{th}$ component, element-wise absolute value and $\ell_2$ norm by $\boldsymbol a(i)$, $|\boldsymbol a|$ and $\|\boldsymbol a\|$, respectively. For a pair of vectors $\boldsymbol a,\boldsymbol b\in\reals^d$, $\llbracket \boldsymbol{a},\boldsymbol{b}\rrbracket$ denotes the hyper-rectangular set $[\boldsymbol{a}(1),\boldsymbol{b}(1)]\times\dots\times[\boldsymbol{a}(d),\boldsymbol{b}(d)]$. Further, given $\boldsymbol{c}\in\reals^d$, $\boldsymbol{c}+\llbracket\boldsymbol{a},\boldsymbol{b}\rrbracket$ is another hyper-rectangular set which is shifted compared to $\llbracket\boldsymbol{a},\boldsymbol{b}\rrbracket$ to the extent determined by $\boldsymbol{c}$. Similarly, for a vector $\boldsymbol \eta\in\reals^d$ and a pair of vectors $\boldsymbol a,\boldsymbol b\in\reals^d$, for which $\boldsymbol a = \alpha \boldsymbol\eta$, $\alpha\in \ints$ and $\boldsymbol b=\beta\boldsymbol\eta$, $\beta\in\ints$, we define 
$\llbracket \boldsymbol{a},\boldsymbol{b}\rrbracket_{\boldsymbol\eta}=\prod_{i=1}^d A_i$,
where $A_i=\set{ \gamma\boldsymbol\eta(i)\mid\gamma\in \ints, \;\alpha\leq \gamma\leq \beta}.$ 
%denotes the (integer) hyper-rectangular set
%Given two points $x=(x_1,\ldots, x_n)$ and $y=(y_1,\ldots, y_n)$ in $ \mathbb{R}^n$
%(or $\mathbb{R}^n_{>0}$)
%and a relational symbol $\triangleright\in \set{\leq, <, = , >, \geq}$, we write $x\triangleright y$ if $x_i\triangleright y_i$ for every $i\in\set{1,2,\ldots,n}$.
%The operator $|\cdot |$ is used to denote both the absolute value of a vector and cardinality of a set, depending on the type of the operand, and the operator $\| \cdot \|$ is used to denote the infinity norm.  
%Let $f\colon A\to B$ and $g\colon C\to D$ be two functions.
%We define the product function $f\otimes g\colon A\times C\to B\times D $, $f\otimes g \colon (a,c)\mapsto (f(a),g(c))$.
%The product is associative and extends to more than two functions in the obvious way.
%
%Throughout, we measure distance between two points, within a space $\mathbb{R}^n$, using the \emph{distance metric} $d\colon \mathbb{R}^p\times \mathbb{R}^p\to \mathbb{R}_{\geq0}$, $d^p\colon \left(u,v\right)\mapsto \|u-v\|$. 
%We extend the definition of $d^p$ to measure distance between a point and a set: for any given $u\in \mathbb{R}^p$ and $V\in \mathbb{R}^p$, $d^p(u,V) \coloneqq \inf_{v\in V}\|u-v\|$.
Given $\boldsymbol c\in \mathbb{R}^n$ and $\boldsymbol \varepsilon\in \mathbb{R}_{>0}^{n}$, the ball with center $c$ and radius $\boldsymbol\varepsilon$ in $\mathbb{R}^n$ is denoted by 
$\ball_{\boldsymbol\varepsilon}(\boldsymbol c)\coloneqq \set{\xb \in \mathbb{R}^n \mid  |\xb-\boldsymbol c|\leq \boldsymbol\varepsilon }$. For two integers $a,b\in\ints$, we define $[a;b]=\set{c\in \ints\mid a\leq c \leq b}$.

Let $A$ be a finite set of size $|A|$. The empty set is denoted by $\emptyset$. When $A$ inherits a coordinate structure, i.e., when its members are vectors on the Euclidean space, $A(i)$ denotes the projection of set $A$ onto its $i^{th}$ dimension. %We say that the injective map $\mathcal I\coloneqq A\mapsto [1;|A|]$ is an \emph{indexing function} for the set $A$, if it assigns every value in $A$ into a unique integer in the interval $[1;|A|]$. 
Further, we use the notation $A^\infty$ to denote the set of all finite and infinite sequences formed using the members of $A$. Our control tasks are defined using a subset of Linear Temporal Logic (LTL). 
In particular, we use the \emph{until} operator $\mathcal{U}$.
Let $p$ and $q$ be subsets of $\reals^n$ and $\rho=(\xb_0,\xb_1,\dots)$ be an infinite sequence of elements from $\reals^n$.  We write $\rho\models p\mathcal{U}q$ if there exists 
$i\in\mathbb{N}$ s.t. $\xb_i\in q$ and $\xb_j\in p$ for all $0\leq j<i$. 
For the detailed syntax and semantics of LTL, we refer to \cite{baier2008principles} and references therein.

%\smallskip
%\noindent
%\textbf{Control Systems.} 
\subsection{Control Systems}
We consider the class of continuous-state continuous-time control systems characterized by the tuple $\Sigma = (X, U, W, f)$, where
$X\subset \reals^n$ is the compact state space,
%$x_\init\in X$ is the initial state,
$U\subset\reals^m$ is the compact input space, and
$W\subset\reals^n$ is the disturbance space being a compact hyper-rectangular set of disturbances which is symmetric with respect to the origin (i.e., for every $\boldsymbol w\in W$ also it is the case that $-\boldsymbol w\in W$). 
The vector field $f: X \times U \rightarrow X$ is such that $f(\cdot, u)$ is locally Lipschitz for all $u\in U$.
The evolution of the state of $\Sigma$ is characterized by the differential inclusion
\begin{equation}
	\label{eq:ODE}
	\dot x(t)\in f(x(t),u(t))+W.
\end{equation}
Given a \emph{sampling time} $\tau>0$, an initial state $x_0\in X$, and a constant input $u\in U$, define the \emph{continuous-time trajectory} $\zeta_{x_0, u}$
of the system on the time interval $[0, \tau]$ as an absolutely continuous function $\zeta_{x_0,u}: [0, \tau] \rightarrow X$ such that $\zeta_{x_0,u}(0) = x_0$, and
$\zeta_{x_0,u}$ satisfies the differential inclusion $\dot{\zeta}_{x_0,u}(t) \in f(\zeta_{x_0,u}(t), u) + W$ for almost all $t\in [0,\tau]$.
Given $\tau$, $x_0$, and $u$, we define $\Sol(x_0, u, \tau)$ as the set of all $x\in X$ such that there is a continuous-time trajectory
$\zeta_{x_0,u}$ with $\zeta(\tau) = x$.
A sequence $x_0,x_1,x_2, \ldots$ is a \emph{time-sampled trajectory} for a continuous control system if $x_0\in X$ and for each $i\geq 0$, we have
$x_{i+1} \in \Sol(x_i, u_i, \tau)$ for some $u_i \in U$.

\subsection{Finite Abstractions}
In order to satisfy a temporal specification on the trajectories of the system, it is generally needed to over-approximate the dynamics of the system with a finite discrete-time model. %known as \emph{finite abstraction}  \cite{Tabuada2009book}.
\begin{comment}
In this work, we consider discrete-state discrete-time control systems. We denote state and input spaces of a control system by $X\subset \reals^n$ and $U\subset\reals^m$, respectively. In case an analytical model for the underlying control system is not given, one common approach is to store the set of transitions for each pair of quantized state and input value.
\end{comment} 
Let $\Xb\subset X$ and $\Ub\subset U$ be the finite sets of states and inputs, computed by (uniformly) quantizing the compact state and input spaces $X$ and $U$ using the  rectangular discretization partitions of size $\etab_x\in\reals^n_{>0}$ and $\etab_u\in\reals^m_{>0}$, respectively. 
 %Let the actual dynamics of the original system to be characterized by Eq.~\eqref{eq:ODE}. 
A \emph{finite abstraction} associated with the dynamics in Eq.~\eqref{eq:ODE} 
is characterized by the tuple $ \bar\Sigma\colon(\Xb, \Ub, T_F)$, where $T_F\subseteq\Xb\times \Ub \times \Xb$ denotes the system's \emph{forward-in-time transition system}. The transition system $T_F$ is defined such that
\begin{align*}
&(\xbb,\ubb,\xbb')\in T_F\Leftrightarrow\exists (\xb,\ub,\xb')\in \ball_{\frac{\etab_x}{2}}(\xbb)\times \ball_{\frac{\etab_u}{2}}(\ubb)\times \ball_{\frac{\etab_x}{2}}(\xbb')\;\text{s.t.}\;
\xb'\in\Sol(\xb,\ub,\tau).
\end{align*}
When the dynamics in Eq.~\eqref{eq:ODE} are known and satisfy the required Lipschitz continuity condition, the finite abstraction can be constructed using the method proposed in \cite{reissig2016feedback}. For systems with unknown dynamics, data-driven schemes for learning finite abstractions can be employed \cite{Milad:2022,Arcak:2021, MAKDESI202149}.
By abusing the notation, we denote the \emph{reachable set} for a state-input pair $(\xbb,\ubb)\in\Xb\times\Ub$ by $T_F(\xbb,\ubb)=\set{\xbb'\in\Xb\mid \xbb'\in \Sol(\xbb,\ubb,\tau)}$.  %$T_F(\xbb,\ubb)=\set{\xbb'\in\Xb\mid (\xbb,\ubb,\xbb')\in T_F}$. 
We assume that the reachable sets take hyper-rectangular form, meaning that for every $\xbb\in \Xb$, $\ubb\in\Ub$ the corresponding reachable set $H=T_F(\xbb,\ubb)$ can be rewritten as $H=\prod_{i=1}^n H(i)$, where $H(i)$ corresponds to the projection of the set $H$ onto its $i^{th}$ coordinate. Otherwise, in case that $H$ is not hyper-rectangular, it is over-approximated by $\prod_{i=1}^n H(i)$. %Therefor, we can equivalently characterize each reachable set by the corresponding lower-left and upper-right coordinates; i.e., for every pair of state and input whose reachable set resides completely in $X$,  $H(\xbb,\ubb)=\llbracket\xbb_{lb}, \xbb_{ub}\rrbracket_q$.  
Note that $\bar\Sigma$ can in general correspond to a \emph{non-deterministic} control system, i.e., $|T_F(\xbb,\ubb))|>1$ for some $\xbb\in\Xb, \ubb\in\Ub$. % computed such that there exists a feedback refinement relation (FRR) $R\subset X\times \Xb$ from $\Sigma$ to $\bar \Sigma$. 

Given $\bar\Sigma$, one can easily compute the characterization of the \emph{backward-in-time} dynamics as
 \begin{equation}\label{eq:BW_sys}\bar\Sigma_B=(\Xb,\Ub,T_B),\; T_B=\set{(\xbb,\ubb,\xbb')\in\Xb\times\Ub\times\Xb\mid(\xbb',\ubb,\xbb)\in T_F}.
 \end{equation} %where $T_B$ represents the backward transition system $T_B\subseteq \Xb\times\Ub\times \Xb$, defined such that $(\xbb_2,\ubb,\xbb_1)\in T_B$ if and only if $(\xbb_1,\ubb,\xbb_2)\in T_F$. 
%A \emph{forward} transition system $T\subseteq\Xb\times \Ub\times\Xb$ is computed such that
%$(\xb,\ub,\xb')\in T$ such that applying the control input $\ub$, the original (continuous) system has at least one trajectory starting inside the cell $\ball_{\eta_x}(\xb)$ which ends inside the cell $\ball_{\eta_x}(\xb')$.
A \emph{trajectory} of $\bar\Sigma$ is a finite or infinite sequence $\xb_0, \xb_1,\xb_2, \ldots \in \Xb^\infty$,
such that for each $i\geq 0$, there is a control input $\ubb_i\in \Ub$ such that
$(\xb_i,\ubb_i,\xb_{i+1}) \in T_F$. %Furthermore, Given a set $P\subseteq\Xb$, the operators $\Post(\cdot,\cdot)$ and $\Pre(\cdot)$ can be computed in the following way:
%\begin{equation*}
%	\Post(\xbb,\ubb)=\set{\xbb'\in \bar X\mid  (\xbb,\ubb,\xbb')\in T_F},
%\end{equation*}
%and
%Furthermore, Given a set $P\subseteq\Xb$, the operator $\Pre(\cdot)$ can be computed in the following way:
The operator $\Pre(\cdot)$ acting on sets $P\subseteq\Xb$ is defined as
\begin{align*}
	&\Pre(P)=\{\xbb\in \Xb \mid \exists \ubb\in\Ub\;s.t.\; T_F(\xbb,\ubb)\subseteq P\}.
\end{align*}
Finally, to compute an over-approximating set of the discrete states that have overlap with a hyper rectangular set $\llbracket\boldsymbol x_{lb}, \boldsymbol x_{ub}\rrbracket$, we define the (over-approximating) quantization mapping as
\begin{equation*}
	\bar K(\boldsymbol x_{lb}, \boldsymbol x_{ub}) = \set{\bar{\boldsymbol x}'\in \bar X\mid \llbracket\bar{\boldsymbol x}'-\boldsymbol{\eta_x}/2, \bar{\boldsymbol x}'+\boldsymbol{\eta_x}/2\rrbracket\cap \llbracket\boldsymbol x_{lb}, \boldsymbol x_{ub}\rrbracket\neq \emptyset}.
\end{equation*}
Similarly, the under-approximating quantization mapping is defined as
\begin{equation*}
	\ubar K(\boldsymbol x_{lb}, \boldsymbol x_{ub}) = \set{\bar{\boldsymbol x}'\in \bar X\mid \llbracket\bar{\boldsymbol x}'-\boldsymbol{\eta_x}/2, \bar{\boldsymbol x}'+\boldsymbol{\eta_x}/2\rrbracket\subseteq \llbracket\boldsymbol x_{lb}, \boldsymbol x_{ub}\rrbracket}.
\end{equation*}

\subsection{Controllers}
For a finite abstraction $\bar\Sigma=(\Xb,\Ub,T_F)$, a feedback controller is denoted by $C\subseteq\Xb\times\Ub$. The set of valid control inputs at every state $\xbb\in\Xb$ is defined as $C(\xbb)\coloneq \set{\ubb\in\Ub\mid(\xbb,\ubb)\in C}$.
We denote the feedback composition of $\bar\Sigma$ with $C$ as $C\parallel\bar\Sigma$. For an initial state $\xbb^\ast\in\Xb$, the set of trajectories of $C\parallel\bar\Sigma$ having length $k\in\nats$ is the set of sequences $\xbb_0,\xbb_1,\xbb_2,\dots,\xbb_{k-1}$, s.t. $\xbb_0=\xbb^\ast$, $\xbb_{i+1}\in T_F(\xbb_i,\ubb_i)$ and $\ubb_i\in C(\xbb_i)$ for $i\in[0;k-2]$. 
%\smallskip
%\noindent
%\textbf{Neural Networks.}
\subsection{Neural Networks}
%A \emph{regressor} $G: \reals^d \rightarrow \reals^o$ is a function
%that maps an input $\boldsymbol a\in\reals^d$ to an output $\boldsymbol b\in\reals^o$.
%In particular, a multi-layer neural network $G$ with \emph{depth} $l\in\mathbb N $ is a parameterized representation of a classifier $\mathcal{G}(\boldsymbol{\theta}, \cdot):\mathbb R^d\rightarrow \mathbb R^{L}$ that is constructed by a forward combination of $n$ functions:
%\begin{equation}
%	\label{eq:DNN}
%	\mathcal{G}(\boldsymbol{\theta}, x):=\mathcal G_n(\theta_n,\mathcal G_{n-1}(\theta_{n-1},\cdots \mathcal G_2(\theta_{2},\mathcal G_1(\theta_1,x)))),
%\end{equation}
%where $x$ denotes the input to the DNN and $\mathcal G_i(\theta_i,\cdot)$, $i\in\{1,2,\ldots,n\}$, denote non-linear functions (called \emph{layers}),
%each parameterized by $\theta_i$. 
%Each layer of the network $\mathcal G_i$ takes its input as a representation of the input data and transforms it to another 
%representation depending on the value of parameters $\theta_i$. 
%The label associated to input $x$ by the DNN $\mathfrak N$ is obtained by finding the component of $\mathcal{G}(\boldsymbol{\theta}, x)$ with the highest value: 
%\begin{equation}
%	\label{eq:max_layer}
%	\mathfrak N(\boldsymbol{\theta},x)=\arg\max\limits_{j\in\{1,2,\cdots,L\}}\mathcal{G}^{(j)}(\boldsymbol{\theta}, x),
%\end{equation}
%where $\mathcal{G}^{(j)}$ is the $j$th element of the output vector $\mathcal{G}$.
A \emph{neural network} $\NN(\boldsymbol\theta,\cdot)\colon \reals^d\rightarrow\reals^q$ of depth $v\in\nats$ is a parameterized function which transforms an input vector $\boldsymbol a\in\reals^d$ into an output vector $\boldsymbol b\in\reals^q$, and is constructed by the forward combination of $v$ functions as follows:
$$
\NN(\boldsymbol\theta,\boldsymbol a)=G_v(\boldsymbol\theta_v,G_{v-1}(\boldsymbol\theta_{v-1},\ldots,G_2(\boldsymbol\theta_2,G_1(\boldsymbol\theta_1,\boldsymbol a)))),
$$
where %$\boldsymbol a$ denotes the input to the neural network and
$\boldsymbol\theta=(\boldsymbol\theta_1,\dots,\boldsymbol\theta_v)$ and
$G_i(\boldsymbol{\theta}_i,\cdot)\colon \reals^{p_{i-1}}\rightarrow \reals^{p_{i}}$ denotes the $i^{th}$ \emph{layer} of $\NN$
parameterized by $\boldsymbol\theta_i$ with $p_0=d$, $p_i\in \nats$ for $i\in[1;v]$ and $p_v=q$. The $i^{th}$ layer of the network, $i\in[1;v]$, takes an input vector in $\reals^{p_{i-1}}$ and transforms it into
an output representation in $\reals^{p_i}$ depending on the value of parameter vector
$\boldsymbol \theta_i$ and type of the used \emph{activation function} in $G_i$. During the \emph{training phase} of the network, the set of parameters $\boldsymbol{\theta}$ is learned over the \emph{training set} which consists of a number of input-output pairs $\{(\boldsymbol a_k,\boldsymbol b_k)\mid k=1,2,\ldots,N\}$, 
in order to achieve the highest performance with respect to an appropriate metric such as mean squared error. For a trained neural network, we drop its dependence on the parameters $\boldsymbol\theta$. In this paper, we characterize a neural network of depth $v$ using its corresponding list of layer sizes, i.e., $(p_1,p_2,\ldots,p_v)$, and the type of the activation function used, e.g., hyperbolic tangent, Rectified Linear Unit (ReLU), etc.% where $p_i\in\nats$, $i\in[1;v]$ denotes dimension of the feature vector at the end of the $i^{th}$ layer.

Neural networks can be used for both \emph{regression} and \emph{classification} tasks. In a regression task, the goal is to predict a numerical value given an input, whereas, a classification task requires predicting the correct class label for a given input. In order to measure performance of the trained neural network, we consider \emph{prediction error}. Note that prediction error is different from the metrics such as \emph{mean squared error} (MSE) which are used during the training phase for defining the objective function for the training. The prediction error for regression and classification tasks is defined differently. For \emph{our regression tasks}, we define the prediction error for a trained neural network $\NN$ over a training set $\{(\boldsymbol a_k,\boldsymbol b_k)\mid k=1,2,\ldots,N\}$ as
$$\boldsymbol e=\max_{k\in[1;N]}|\NN(\boldsymbol a_k)-\boldsymbol b_k|.$$
In this paper, we consider the classification tasks wherein there may exist more than one valid class label for each input. Therefore, the training set would be of the form $\{(\boldsymbol a_k,\boldsymbol b_k)\mid k=1,2,\ldots,N\}$, where $b_k\in\{0,1\}^q$ %for $k\in[1;N]$
and $\boldsymbol b_k(i)=1$ iff $i\in[1;q]$ corresponds to a valid label at $\boldsymbol a_k$. %s and $\boldsymbol b_k(i)=0$ for the rest.
Since the number of valid labels for each input can be different, we define the prediction error of a trained classifier $\NN$ in the following way:
$$err=\frac{|\set{k\in[1;N]\mid \boldsymbol b_k(i)=0\;\text{with }   i=argmax(\NN(\boldsymbol a_k))}|}{N}.$$
For a given neural network $\NN$ with the training set $\{(\boldsymbol a_k,\boldsymbol b_k)\mid k=1,2,\ldots,N\}$, we define the \emph{continuity index} as 
\begin{equation}\label{eq:cont_idx}
\alpha_\NN =  \max_{1\leq i,j\leq N,\;i\neq j}\frac{\|\NN(\boldsymbol a_i) -\NN( \boldsymbol a_j)\|}{\|a_i-a_j\|}.
\end{equation}
% !TEX root = main.tex

%\section{Memory-efficient Reach-Avoid Synthesis}
%\label{sec:problem}
\subsection{Problem Statement}\label{subsec:problem}

We now consider the controller synthesis problem for finite abstractions %discrete control systems 
w.r.t.\ a reach-avoid specification. % $\Phi = \lnot\avoid\, \mathcal{U}\,\goal$,
%where $\avoid, \goal \subseteq X$ are subsets of the state space. 
Let $\goal, \avoid \subseteq X, \goal\cap\avoid =\emptyset$ be the set of states representing the target and unsafe spaces, respectively. %Given a (non-deterministic) discrete control system $\bar\Sigma$, the aim is to find states from which there exists at least one sequence of control inputs which steers the state of system into the set $\goal$ region before visiting the set $\avoid$, under any disturbance. 
The \emph{winning domain} for the finite abstraction $\bar\Sigma=(\Xb,\Ub,T_F)$ is the set of states $\xbb^\ast\in\Xb$ such that there exists a feedback controller $C$ such that all trajectories of $C\parallel\bar \Sigma$, which are started at $\xbb^\ast$, satisfy the given specification $\Phi$. $\xbb_0=\xbb^\ast,\xbb_1,\xbb_2,\dots\models\Phi$. % and abstraction $\bar\Sigma$ is defined as the set of states $\xbb^\ast\in\Xb$ such that for every initial state $\xb_0\in \ball_{\frac{\etab_u}{2}}(\xbb^\ast)$ there exists a finite trajectory of $\Sigma$ starting at $\xb_0$, which ends up entering the target set $\goal$ before entering the unsafe set $\avoid$. %which there exists at least one control sequence $\mu(\xbb^\ast,k)=\ubb_0,\ubb_1,\dots,\ubb_k,\;u_i\in\Ub,i\in[0;k]$ such that all $k+1$-length trajectories of $\bar\Sigma$ started at $\xb^\ast$ under the control sequence $\mu(\xbb^\ast,k)$ end up entering the target set $\goal$ before entering the unsafe set $\avoid$.
 The aim is to find the set of the winning states $L$ together with a feedback controller $C$ such that $C\parallel \bar\Sigma$ satisfies the reach-avoid specification $\Phi $. To compute the winning domain and the controller, one can use the methods from reactive synthesis. For many of interesting control systems, size of $T_F$ in the finite abstraction becomes huge. This restricts the application of reactive-synthesis-based methods for computing the controller. Therefore, we are looking for a method which uses compressed surrogates of $T_F$ to save memory. In particular, we want to train two \emph{corrected neural} surrogates, i.e., neural network representations whose output is corrected to maintain the soundness property: $R_F$ for the forward-in-time dynamics and $R_B$ for the backward-in-time dynamics. %The memories required to store $T_F$, $R_F$ and $R_B$ are denoted by $\M_T$, $\M_F$ and $\M_B$, respectively.

% 
\begin{comment}
\begin{resp}
\begin{problem}%[Controller Synthesis for Reach-avoid Specifications of continuous systems]
\label{prob:continuous}
\noindent\textbf{Inputs:} 
Continuous control systems $\Sigma_c=(X,U,W_c,f_c)$, sampling time $\tau>0$ and specification $\Phi=\lnot\avoid\,\mathcal{U}\,\goal$.

\noindent\textbf{Outputs:} Winning domain $L$ and a feedback controller $C$ for $\Sigma_c$ such that $C\parallel\Sigma_c$ realizes $\Phi$. 
\end{problem}
\end{resp}
\end{comment}
%\begin{comment}
\begin{resp}
	\begin{problem}%[Controller Synthesis for Reach-avoid Specifications of discrete systems]
		\label{prob:discrete}
		\noindent\textbf{Inputs:} Finite abstraction $\bar\Sigma=(\Xb,\Ub,T_F)$, and specification $\Phi=\lnot\avoid\,\mathcal{U}\,\goal$.
		
		%\noindent\textbf{Confidence level:} $\beta\in(0,1)$.
		
		\noindent\textbf{Outputs:} Corrected neural representations $R_F$ and $R_B$,  %($\M_F+\M_B<\M_T$), 
		winning domain $L$ and  a feedback controller $C$ for $\bar\Sigma$ such that $C\parallel\bar\Sigma$ realizes $\Phi$. 
	\end{problem}
\end{resp}

\begin{comment}
\begin{description}
	\item[Inputs:] Control systems $\Sigma^i=(X^i,x_\init^i,U^i,W^i,f^i)$, $i\in [1;N]$, and the global specification $\Phi=\lnot\avoid\,\mathcal{U}\,\goal$.% for some $\avoid\subseteq X^\times$ and $\goal\subseteq X^\times$, where $X^\times$ is the state space of the product control system of $\set{\Sigma^i}_{i\in [1;N]}$.
	\item[Parameters:] A robustness margin $\varepsilon\in \reals^n_{>0}$.%, state space discretization parameter $\eta_x\in\reals^n$ and input space discretization parameter $\eta_u\in\reals^m$.
	\item[Output:] Local feedback controllers $\set{C^i}$ for $\set{\Sigma^i}$, $i\in [1;N]$, such that $\set{C^i}\parallel \set{\Sigma^i}$ realizes $\Phi$. 
\end{description}
\end{comment}

It is important to notice that any solution for this problem is required to provide a \emph{formal guarantee} on the satisfaction of $\Phi$, 
i.e., the reach-avoid specification $\Phi$ \emph{must} be satisfied under any disturbance affecting the control systems.

Let $C\in \Xb\times \Ub$ be the computed controller for the abstraction $\bar\Sigma$ such that $C\parallel\bar\Sigma$ realizes a given specification $\Phi$. The size of this controller can be large due to the large number of discrete state and inputs. For deployment purposes, we would like to compute a corrected neural controller $\hat C\coloneqq \Xb\rightarrow\Ub$ s.t. $\hat C\parallel \bar\Sigma$ realizes $\Phi$. %We denote the memories required to store $C$ and $\hat C$ by $\M_C$ and $\M_{\hat C}$, respectively.
\begin{resp}
	\begin{problem}
		\label{prob:controller_compression}
		\noindent\textbf{Inputs:} Controller $C$ computed for the discrete control system $\bar\Sigma$, and specification $\Phi$ s.t. $C\parallel\bar\Sigma$ realizes $\Phi$.
		
		\noindent\textbf{Outputs:} A corrected neural controller $\hat C$  %($\M_C<\M_{\hat C}$)
		 such that $\hat C\parallel\bar\Sigma$ realizes $\Phi$. 
	\end{problem}
\end{resp}
\vspace{-.2cm}	
% !TEX root = main.tex

\section{Synthesis}\label{sec:synthesis}
One approach to formally synthesize controllers for a given specification is to store the transition system corresponding to quantization of the state and input spaces, and to use the methods from reactive synthesis to design a controller. However, the memory required to store these transition systems increases exponentially with the number of state variables, which causes a memory blow-up for many real-world systems.
In this section, we propose our memory-efficient algorithm for synthesizing controllers to satisfy reach-avoid specifications for finite abstractions and reach-avoid specifications. Our method requires computation of corrected neural representations for the finite abstraction. Computation of these representations is discussed in Sec.~\ref{subsec:TS_compression_regression}. Later, in Sec.~\ref{subsec:on-the-fly_synthesis}, we show how our synthesis method makes use of the computed representations.
\subsection{Corrected Neural Representations for Finite Abstractions}\label{subsec:TS_compression_regression}
Let $\bar\Sigma=(\Xb,\Ub,T_F)$ be a finite abstraction. In this section, we show that 
% behavior encoded in 
$T_F$ can be approximated by some \emph{generator functions}. In particular, we show how to compute generator functions $R_F\colon \Xb\times\Ub\rightarrow\reals^n\times\reals^n_{\geq 0}$ and $R_B\colon\Xb\times\Ub\rightarrow\reals^n\times\reals^n_{\geq 0}$ which can produce characterization of an $\ell_\infty$ ball corresponding to the over-approximation of forward- and backward-in-time reachable sets, respectively, for every state-input pair picked from $\Xb\times\Ub$. 
\begin{algorithm}[t]
	\caption{Regression-based compression algorithm for finite abstractions}
	\label{alg:abstraction}
	\begin{comment}
	\KwData{$f(\cdot,\cdot)$, $X$, $U$, $N_p$, $\boldsymbol \epsilon$ $\etab_x$, $\etab_u$, $\beta(\cdot, \cdot)$, $\lambda$}
	Use ALTRO to compute a nominal trajectory $\rho\colon[0;N_p]\rightarrow X$ s.t. $\rho(0)=\xb_\init$ and $\rho(N_p)\approx \xb_f$\\
	Discretize $X$ and $U$ using $\etab_x$ and $\etab_u$ to get $\Xb$ and $\Ub$\\
	$\Xro\leftarrow\emptyset$\\
	$\Xrob\leftarrow\emptyset$\\
	\For{$t\in[0;N_p]$}
	{
	$\Xro\leftarrow\Xro\cup\llbracket \rho(t)-\epsilon,\rho(t)+\epsilon\rrbracket$\\
	
	$\Xrob\leftarrow\Xrob\cup(\Xb\cap K(\llbracket \rho(t)-\epsilon,\rho(t)+\epsilon\rrbracket))$
	}
	\end{comment}
	\KwData{Forward dynamics $\bar\Sigma$ and learning rate $\lambda$}
	Compute backward dynamics $\bar\Sigma_B$ and the datasets $\ds_F$ and $\ds_B$ using Eqs.~\eqref{eq:BW_sys}, \eqref{eq:FW_ds}, and \eqref{eq:BW_ds}\\
	Train neural networks $\NN_F$ on the dataset $\ds_F$ and train $\NN_B$ on $\ds_B$ using the learning rate $\lambda$\\
	Compute the soundness errors $\eFb$ and $\eBb$ using Eq.~\eqref{eq:FW_soundness_err}\\
	Compute the final corrected representations $R_F$ and $R_B$ using Eqs.~\eqref{eq:R_F} and \eqref{eq:R_B}\\
	\KwResult{corrected neural representations $R_F$ and $R_B$}
	%\KwResult{$\NN_F$, $\NN_B$, $\eFb$ and $\eBb$}
\end{algorithm}
%In the case where the system's behavior is represented by a finite abstraction,
Our aim is  to use the expressive power of neural networks to represent the behavior of $\bar \Sigma$ such that the memory requirements significantly decrease. 

\begin{figure}[t]
	\centering
	\hspace{-.7cm}
	\includegraphics[scale=.7]{./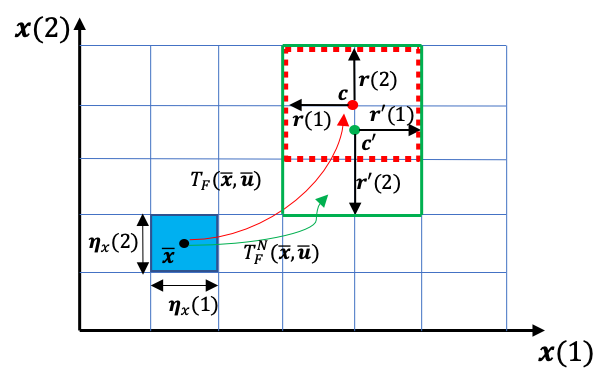}
	\caption{Comparing the set of successor states in the transition system $T_F$ and its representation $T_F^N$. We have
	$\boldsymbol c=c_F(\xbb,\ubb)$, 
	$r_F(\xbb,\ubb) = [\boldsymbol r(1),\boldsymbol r(2)]^\top$,
	$\boldsymbol c'=\NN_F^c(\xbb,\ubb)$
	and
	$\NN_F^r(\xbb,\ubb) = [\boldsymbol r'(1),\boldsymbol r'(2)]^\top$.}
	\label{fig:NN_vs_f}	
\end{figure}

Our compression scheme is summarized in Alg.~\ref{alg:abstraction}. We first compute the backward-in-time system $\bar\Sigma_B$ using Eq.~\eqref{eq:BW_sys}.
We then calculate the over-approximating $\ell_\infty$ ball for every state-input pair. Let $c_F(\xbb,\ubb)\in X$ and $r_F(\xbb,\ubb)\in\reals_{\geq 0}^n$ characterize the tightest $\ell_\infty$ ball such that 
 $$(\xbb,\ubb,\xbb')\in T_F\Leftrightarrow \|\xbb'-c_F(\xbb,\ubb)\|_{\infty}\leq r_F(\xbb,\ubb)-\etab_x/2.$$
This is illustrated in Fig.~\ref{fig:NN_vs_f} in two-dimensional space for a given state-input pair $(\xbb,\ubb)$. The dotted red rectangle corresponds to the hyper-rectangular reachable set. % and $\boldsymbol c=c_F(\xbb,\ubb)$, $\begin{bmatrix}\boldsymbol r(1)&\boldsymbol r(2)\end{bmatrix}^\top=r_F(\xbb,\ubb)$.
%\Sadegh{Why do you need expand the red rectangle? Isn't it better to have reachable sets that are not rectangular to justify expanding the set?}
The center $c_F(\xbb,\ubb)$ and radius $r_F(\xbb,\ubb)$ are computed using the lower-left and upper-right corners of the reachable set denoted, respectively, by $g_{FL}(\xbb,\ubb)$ and $g_{FU}(\xbb,\ubb)$. Then, we have
$c_F(\xbb,\ubb)=(g_{FU}(\xbb,\ubb)+g_{FL}(\xbb,\ubb))/2$ and $r_F(\xbb,\ubb)=(g_{FU}(\xbb,\ubb)-g_{FL}(\xbb,\ubb))/2+\etab_x/2$.
%\Sadegh{why do you need product here in $c_F$?}
 %
 At the end of the first step we have computed and stored the dataset
\begin{equation}\label{eq:FW_ds}
	\ds_F=\set{((\xbb, \ubb), (c_F(\xbb, \ubb), r_F(\xbb,\ubb)))\mid \xbb\in \Xb, \ubb\in \Ub}.
\end{equation}
Note that every data-point in $\ds_F$ consists of two pairs: one specifies a state-input pair $(\xbb,\ubb)$ and the other one characterizes the center and radius corresponding to the over-approximating $\ell_\infty$ disc $(c_F(\xbb, \ubb), r_F(\xbb,\ubb))$. Similarly, we need to store another dataset corresponding to the backward dynamics. First, we define $c_B(\xbb,\ubb)\in X$ and $r_B(\xbb,\ubb)\in\reals_{\geq 0}^n$ characterizing the tightest $\ell_\infty$ ball such that 
$$(\xbb,\ubb,\xbb')\in T_B\Leftrightarrow \|\xbb'-c_B(\xbb,\ubb)\|_{\infty}\leq r_B(\xbb,\ubb)-\eta_x/2.$$
The dataset corresponding to the backward dynamics is of the following form
\begin{equation}\label{eq:BW_ds}
	\ds_B=\set{((\xbb, \ubb), (c_B(\xbb, \ubb), r_B(\xbb,\ubb)))\mid \xbb\in \Xb, \ubb\in \Ub}.
\end{equation}
\begin{figure}[t]
		\includegraphics[scale=.25]{./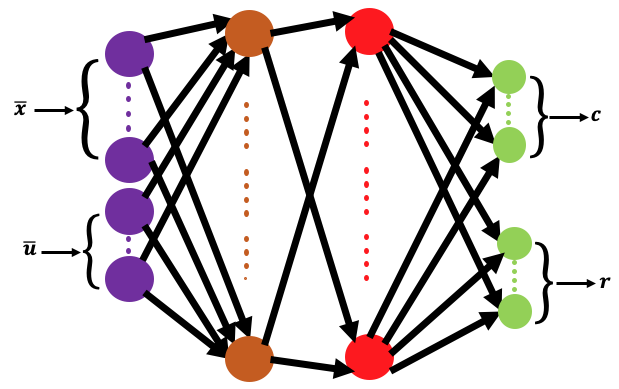}
	\caption{The regression-based configuration used in compression of abstractions. The input to the neural network includes state-input pair $(\xbb,\ubb)$, and the output includes the pair $(c,r)$ corresponding to the center and radius of the rectangular reachable set, respectively.
	Right: The classification-based representation of finite abstractions. The representation receives a state-input pair $(\xbb,\ubb)$. In the output, $\boldsymbol y_{lb}$ and $\boldsymbol y_{ub}$ correspond to the lower-left and upper-right corners for the rectangular reachable set.}
	\label{fig:NN_TS_regression}
\end{figure}
\begin{comment}
\begin{figure}
	%\centering
	%\vspace{-3cm}
	\hspace{-2.5cm}
	\includegraphics[scale=.5,width=.59\textwidth]{./figures/NN_TS_rgr.png}
	%\vspace{-0cm}
	%\hspace{-3cm}
	%\includegraphics[scale=.25, width=.49\textwidth]{./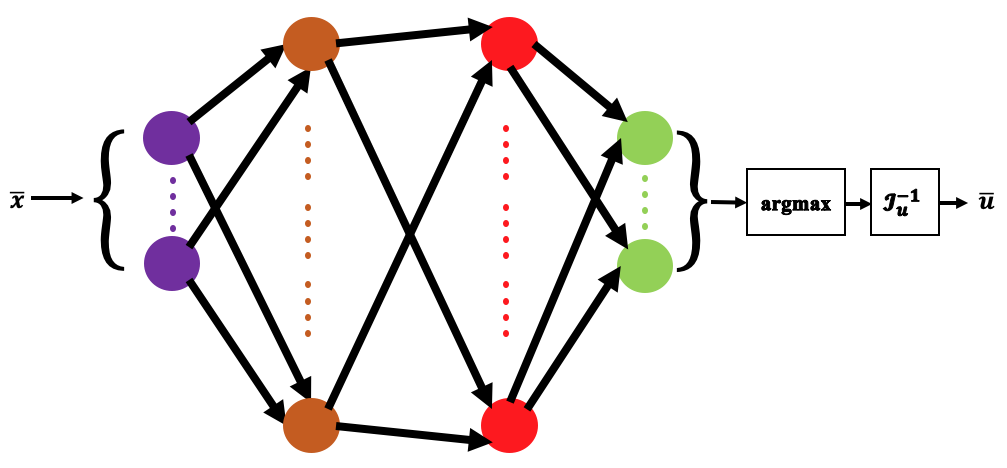}
	%\vspace{-2.5cm}
	\caption{The configuration used in compression of abstractions. The input to the neural network includes state-input pair $(\xbb,\ubb)$, and the output includes the pair $(c,r)$ corresponding to the center and radius of the rectangular reachable set, respectively.}
	%\vspace{-.3cm}
	\label{fig:NN_TS_cnfigs}	
\end{figure}
\end{comment}
The size of $\ds_F$ and $\ds_B$ grows exponentially with the dimension of state space. Hence, we store both the datasets $\ds_F$ and $\ds_B$ (potentially) into the hard drive. Next, we take the datasets $\ds_F$ and $\ds_B$, for which we train neural networks $\NN_F$ and $\NN_B$, taking the state-input pairs $(\xbb,\ubb)$ as input and $(c_F(\xbb,\ubb),r_F(\xbb,\ubb))$ %the characterization of the over-approximating discs using the corresponding pairs of center and radius 
as output, and try to find an input-output mapping minimizing %an appropriate metric such as 
mean squared error (MSE). For systems with state and input spaces of dimensions $n$ and $m$, the input and output layers of both neural networks are of sizes $n+m$ and $2n$, respectively. The configuration of the neural networks which we used is illustrated in Fig.~\ref{fig:NN_TS_regression}. During training, we load batches of data from $\ds_F$ and $\ds_B$, which are stored on the the hard drive, into the RAM. We use the stochastic gradient descent (SGD) method to minimize MSE. 

As mentioned earlier, in contrast to the usual applications wherein neural networks are used to represent an unknown distribution, we have the full dataset and require computing representations which are \emph{sound} with respect to the input dataset.
%
%\Sadegh{Change ``pair of state and input" to ``state-inout pair" in the whole document.}
A sound representation for the given finite abstractions produces reachable sets that include $T_F(\xbb,\ubb)$ for every state-input pair $(\xbb,\ubb)$. For instance, the solid green rectangle in Fig.~\ref{fig:NN_vs_f} contains the set of reachable states corresponding to $\NN_F(\xbb,\ubb)$ and contains the set of states included in the dotted red rectangle, i.e.,  $T_F(\xbb,\ubb)$. Therefore, we can say that the representation $\NN_F$ is sound for the pair $(\xbb,\ubb)$.
%\Sadegh{Why do you suddenly have $T_f$ in this paragraph? Aren't we using a different symbol for this?}
%
In order to guarantee \emph{soundness}, we need to compute the maximum error induced during the training process among all the training data points. To that end, we go over all the state-input pairs (which are stored on the hard drive) and compute the maximum error in approximating the centers of the $\ell_\infty$ balls, denoted by $\eFb^c, \eBb^c$ and radius $\eFb^r, \eBb^r$ corresponding to the forward and backward representations:
\begin{align*}\label{eq:FW_soundness_err_center}
	\eFb^c&=\max_{\xbb\in\Xb,\ubb\in\Ub}|c_F(\xbb,\ubb)-\NN_F^c(\xbb,\ubb)|,\quad \eFb^r=\max_{\xbb\in\Xb,\ubb\in\Ub}|r_F(\xbb,\ubb)-\NN_F^r(\xbb,\ubb)|.
\end{align*}
Similarly, for the backward dynamics,
\begin{align*}
	\eBb^c&=\max_{\xbb\in\Xb,\ubb\in\Ub}|c_B(\xbb,\ubb)-\NN_B^c(\xbb,\ubb)|,\quad \eBb^r=\max_{\xbb\in\Xb,\ubb\in\Ub}|r_B(\xbb,\ubb)-\NN_B^r(\xbb,\ubb)|.
\end{align*}
We define
\begin{equation}\label{eq:FW_soundness_err}
 \eFb=\eFb^c+\eFb^r,\quad \eBb=\eBb^c+\eBb^r,
\end{equation} 
and use the errors $\eFb$ and $\eBb$ to compute the \emph{corrected representations} $R_F$ and $R_B$, corresponding to $\NN_F$ and $\NN_B$, as described next. Let $R_F^c$ and $R_F^r$ correspond to the center and radius components of $R_F$. Similarly, $R_B^c$ and $R_B^r$ correspond to the center and radius components of $R_B$. For state-input pair $(\xbb,\ubb)\in\Xb\times \Ub$, we define
\begin{align}\label{eq:R_F}
	R_F^c(\xbb,\ubb)&=\NN_F^c(\xbb,\ubb),\quad R_F^r(\xbb,\ubb)=\NN_F^r(\xbb,\ubb)+\eFb,
\end{align}
for the forward dynamics, and
\begin{align}\label{eq:R_B}
	R_B^c(\xbb,\ubb)&=\NN_B^c(\xbb,\ubb),\quad R_B^r(\xbb,\ubb)=\NN_B^r(\xbb,\ubb)+\eBb,
\end{align}
for the backward dynamics.

Let us define the forward transition system computed using the trained neural network as follows
\begin{align}\label{eq:TS_NN_F}
	T_F^N = &\{(\xbb,\ubb, \xbb')\!\in\!\Xb\!\times\! \Ub\!\times\! \Xb\!\mid\!\xbb'\!\in\! \bar K(\NN_F^c(\xbb, \ubb)\!-\!\NN_F^r(\xb, \ubb)\!-\!\eFb,\NN_F^c(\xbb, \ubb)\!+\!\NN_F^r(\xbb, \ubb)\!+\!\eFb)\},
	\end{align}
	where $\NN_F^c(\cdot, \cdot)$, $\NN_F^r(\cdot, \cdot)$ denote the components of the output of $\NN_F(\cdot, \cdot)$ corresponding to the center and radius of disc, respectively. Similarly, we can define the transition system $T_B^N$ corresponding to the backward dynamics as follows
	\begin{align}\label{eq:TS_NN_B}
		T_B^N = &\{(\xbb,\ubb, \xbb')\!\in\!\Xb\!\times\! \Ub\!\times\! \Xb\!\mid\!\xbb'\!\in\! \bar K(\NN_B^c(\xbb, \ubb)\!-\!\NN_B^r(\xbb, \ubb)\!-\!\eBb,\NN_B^c(\xbb, \ubb)\!+\!\NN_B^r(\xbb, \ubb)\!+\!\eBb)\}.
	\end{align}
	 The following lemma states that we can use the trained neural networks to compute sound transition systems for both forward and backward dynamics. However, our synthesis approach does not require the computation of $T_F^N$ and $T_B^N$ and only uses the compressed representations $\NN_F$ and $\NN_B$.
	\begin{lemma}\label{lem:TS_soundness}
		%Let $T_F^N$ and $T_B^N$ be the 
		Transition systems $T_F^N$ and $T_B^N$ computed by \eqref{eq:TS_NN_F} and \eqref{eq:TS_NN_B} are sound for $T_F$ and $T_B$, i.e., we have $T_F\subseteq T_F^N$ and $T_B\subseteq T_B^N$.
	\end{lemma}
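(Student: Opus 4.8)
The plan is to prove the forward inclusion $T_F \subseteq T_F^N$ directly, since the backward inclusion $T_B \subseteq T_B^N$ follows by the identical argument with $F$ replaced by $B$ throughout. I would fix an arbitrary transition $(\xbb, \ubb, \xbb') \in T_F$ and show $(\xbb, \ubb, \xbb') \in T_F^N$, i.e.\ that $\xbb'$ lies in $\bar K(\NN_F^c(\xbb,\ubb) - \NN_F^r(\xbb,\ubb) - \eFb,\, \NN_F^c(\xbb,\ubb) + \NN_F^r(\xbb,\ubb) + \eFb)$. Unfolding the definition of $\bar K$, it suffices to show that the cell $\llbracket \xbb' - \etab_x/2, \xbb' + \etab_x/2\rrbracket$ meets the box $\llbracket \NN_F^c - \NN_F^r - \eFb,\, \NN_F^c + \NN_F^r + \eFb\rrbracket$. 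Since $\xbb'$ is the center of its own cell, it is enough to establish the stronger statement that the grid point $\xbb'$ belongs to that box, i.e.\ $|\xbb'(i) - \NN_F^c(\xbb,\ubb)(i)| \leq \NN_F^r(\xbb,\ubb)(i) + \eFb(i)$ for every coordinate $i$. Thus the whole lemma reduces to a componentwise bound, and the remaining work is a triangle-inequality estimate. From here I suppress the common argument $(\xbb,\ubb)$ and write $(i)$ for the $i$-th component, consistent with the paper's notation.

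To obtain the bound I would insert $c_F$ and $r_F$ as intermediates, one coordinate at a time. Membership $(\xbb,\ubb,\xbb') \in T_F$ gives, via the tightest-$\ell_\infty$-ball characterization, $|\xbb'(i) - c_F(i)| \leq r_F(i) - \etab_x(i)/2$. Applying the triangle inequality together with the definition of the center soundness error $\eFb^c$,
$$|\xbb'(i) - \NN_F^c(i)| \leq |\xbb'(i) - c_F(i)| + |c_F(i) - \NN_F^c(i)| \leq \bigl(r_F(i) - \etab_x(i)/2\bigr) + \eFb^c(i).$$
Next I would replace the true radius by the learned radius plus its soundness slack, using $r_F(i) \leq \NN_F^r(i) + |r_F(i) - \NN_F^r(i)| \leq \NN_F^r(i) + \eFb^r(i)$. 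Substituting and using $\eFb = \eFb^c + \eFb^r$ from \eqref{eq:FW_soundness_err} yields
$$|\xbb'(i) - \NN_F^c(i)| \leq \NN_F^r(i) + \eFb(i) - \etab_x(i)/2 \leq \NN_F^r(i) + \eFb(i),$$
which is exactly the desired componentwise inequality. Hence $\xbb'$ lies in the box, its cell meets the box, and $(\xbb,\ubb,\xbb') \in T_F^N$.

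There is no real obstacle here; the argument is a two-term triangle inequality, and the only care needed is in the bookkeeping. First, all operations ($|\cdot|$, $\max$, the inequalities) must be read componentwise, matching the vector-valued conventions of $\ball_{\boldsymbol\varepsilon}$ and of the soundness errors $\eFb^c, \eFb^r$. Second, I would emphasize that point membership of $\xbb'$ is a convenient strengthening of the cell-intersection condition defining $\bar K$, so the over-approximating nature of $\bar K$ is never actually needed for soundness. Finally, the spare $-\etab_x(i)/2$ term surviving at the end makes explicit that the correction by $\eFb$ is in fact slightly conservative, which is harmless. The backward inclusion is then obtained verbatim by substituting $c_B, r_B, \NN_B^c, \NN_B^r, \eBb^c, \eBb^r, \eBb$ and $T_B, T_B^N$ for their forward counterparts, completing the proof.
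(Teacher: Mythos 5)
Your proof is correct and takes essentially the same route as the paper's: the paper's proof simply asserts that, by definition of the soundness errors, the true reachable box is contained in the inflated box $\llbracket \NN_F^c-\NN_F^r-\eFb,\;\NN_F^c+\NN_F^r+\eFb\rrbracket$, and your componentwise triangle-inequality computation using $\eFb=\eFb^c+\eFb^r$ is precisely the verification of that containment. The only cosmetic difference is that you establish membership of each successor grid point $\xbb'$ directly (noting that point membership suffices for the cell-intersection condition in $\bar K$) rather than stating the box inclusion outright, which amounts to the same estimate.
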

\begin{comment}
\begin{proof}
	By definition, we have that $\llbracket-\boldsymbol r+f(\xbb,\ubb), \boldsymbol r+f(\xbb,\ubb)\rrbracket\subseteq \llbracket\NN_B^c(\xbb, \ubb)-\NN_B^r(\xbb \ubb)-\eBb,\NN_B^c(\xbb, \ubb)+\NN_B^r(\xbb, \ubb)+\eBb\rrbracket$. This gives $T\subseteq T_N$.
\end{proof}
\end{comment}
%Intuitively, the above lemma states that the transition systems computed using the corresponding representations are sound. However, to avoid restrictiveness, we also require that the trained representations produce transition systems which do not contain too many additional edges compared to the original transition system.
To reduce the level of conservativeness, we require that $T_F^N$ and $T_B^N$ do not contain too many additional edges compared to $T_F$ and $T_B$. The \emph{mismatch rate} of the forward and backward dynamics are defined as 
\begin{equation*}\label{eq:mismatch_rate_FW_BW}
	d_F\coloneq \frac{|T_F^N\setminus T_F|}{|T_F|}, \quad d_B\coloneq \frac{|T_B^N\setminus T_B|}{|T_B|}.
\end{equation*}
%Similarly, for the backward dynamics, we define
%\begin{equation*}\label{eq:mismatch_rate_BW}
%	d_B=\frac{|T_B^N\setminus T_B|}{|T_B|}.
%\end{equation*}
If the trained representations are accurate, the mismatch rate is low, which results in a \emph{less restrictive} representation. 
\begin{remark}\label{rem:cls_method}
The method proposed in this section formulates the computation of the representations as a regression problem, wherein the representative neural networks are supposed to predict the center and radius corresponding to $\ell_\infty$ reachable sets.
In Sec.~\ref{sec:TS_compression_classification}, we describe a classification-based formulation for compressing finite abstractions, wherein the representative neural networks are supposed to predict the vectorized indices corresponding to the lower-left and upper-right corners of the reachable set. We experimentally show that this second formulation, while being more memory demanding, provides a less conservative representation compared to the formulation discussed in this section.
\end{remark}

\begin{comment}Having the trained neural networks representing the forward and backward dynamics, and computed the networks' corresponding soundness errors we finish the compression procedure.\end{comment}
\begin{comment}
\begin{figure}[t]
	\hspace{-2cm}
	\includegraphics[scale=.2]{./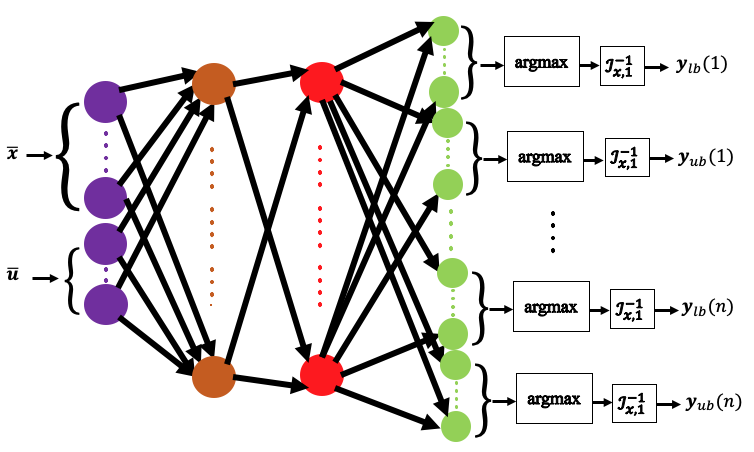}
	\vspace{-2cm}
	\caption{Illustration for the configuration used for the classification-based representations of the abstraction. The neural network takes a state $\xbb$ and a control input $\ubb$ and computes the coordinates of the lower-left and upper-right corner points characterizing the corresponding reachable set $\llbracket\boldsymbol x_{lb}, \boldsymbol x_{ub}\rrbracket$.}
	\label{fig:NN_TS_cls}	
\end{figure}
\end{comment}

% !TEX root = main.tex
\begin{figure}[t]
		\includegraphics[scale=.25]{}
	\caption{The classification-based representation of finite abstractions. The representation receives a state-input pair $(\xbb,\ubb)$. In the output, $\boldsymbol y_{lb}$ and $\boldsymbol y_{ub}$ correspond to the lower-left and upper-right corners for the rectangular reachable set.}
	\label{fig:NN_TS_cls}
\end{figure}
\subsection{Classification-Based Computation of Representations for Finite Abstractions}\label{sec:TS_compression_classification}
We proposed in Sec.~\ref{subsec:TS_compression_regression} a formulation for training neural networks that can \emph{guess} at any given state-input pair the center and radius of a hyper-rectangular over-approximation of the reachable states. This guess is then \emph{corrected} using the computed soundness errors. A nice aspect of this formulation is that we only need to store the trained representations and their corresponding soundness errors. However, the result of using the soundness errors to correct the output values produced by the neural networks may give a very conservative over-approximation of the reachable sets, even when the trained representations have a very good performance on a large subset of the state-input pairs, since the soundness errors must be computed over \emph{all} state-input pairs.

\begin{algorithm}[t]
	\caption{Computing classification-based representations of finite abstractions}
	\label{alg:compression_cls}
	\KwData{Forward dynamics $\bar\Sigma$ and learning rate $\lambda$}
	Compute backward dynamics $\bar\Sigma_B$ and the datasets $\ds_F$ and $\ds_B$ using Eqs.~\eqref{eq:BW_sys}, \eqref{eq:FW_ds_cls}, and \eqref{eq:BW_ds_cls}\\
	Train neural networks $\NN_F$ and $\NN_B$ on the datasets $\ds_F$ and $\ds_B$ using the learning rate $\lambda$\\
	Compute the set of misclassified state-input pairs $E_F$ and $E_B$ as in Eq.~\eqref{eq:miss}\\
	Compute the set of transitions $\tilde{\NN}_F$ and $\tilde{\NN}_B$ associated with $E_F$ and $E_B$ as in  Eq.~\eqref{eq:TS_to_be_stored}\\
	Compute the corrected neural representations $R_F$, $R_B$ using Eqs.~\eqref{eq:R_F_def_cls}, \eqref{eq:R_B_def_cls}\\
	\KwResult{$R_F$, $R_B$}
	%\KwResult{$\NN_F$, $\NN_B$, $\tilde{\NN}_F$, $\tilde{\NN}_B$}
\end{algorithm}
In this section, we provide an alternative formulation for computing a compressed representation of a given abstraction. Intuitively, our idea is to train neural network representations which can guess for any given state-input pair the \emph{vectorized indices} corresponding to the lower-left and upper-right corner points of the hyper-rectangular reachable set. The architecture of the representation is shown in Fig.~\ref{fig:NN_TS_cls}. As illustrated, for every state-input pair $(\xbb,\ubb)\in \Xb\times \Ub$, the output of the representation gives %$\xbb_{lb}(\xbb,\ubb)$ and $\xbb_{ub}(\xbb,\ubb)$, corresponding to 
the lower-left and upper-right corners of the rectangular set that is reachable by taking the control input $\ubb$ at the state $\xbb$. 
\begin{comment}
\begin{figure}
	\centering
	\vspace{-2cm}
	\hspace{-3cm}
	\includegraphics[scale=.2]{./figures/NN_TS_cls.png}
	\vspace{-2cm}
	\caption{Illustration of the classification-based representation of finite abstractions.
	The representation receives a state-input pair $(\xbb,\ubb)$. In the output, $\boldsymbol y_{lb}$ and $\boldsymbol y_{ub}$ correspond to the lower-left and upper-right corners for the rectangular reachable set.
		%the input includes the pair of state and input $(\xbb,\ubb)$, and the output includes $\boldsymbol y_{lb}$ and $\boldsymbol y_{ub}$, respectively, corresponding to the lower-left and upper-right corners for the estimated rectangular reachable set. %for the state-input pair $(\xbb,\ubb)$.
	}
	\label{fig:cls_NN_cnfig}	
\end{figure}
\end{comment}
Alg.~\ref{alg:compression_cls} describes our classifier-based compression scheme for finite abstractions. 
 %Therefore, in our new formulation, the representations are of the form of a \emph{multi-label} classifier, whose set of labels must be able to cover the set $\Xb^2$, so that it can output the both corner points.  
 %However, we notice that the size of $\Xb$ is too large, and therefore we consider the vectorized indexing system for characterizing the cells inside $\xbb$...
  %The illustrated configuration can be broken into $2n$ classifiers, so that for $1\leq i \leq n$, the $2i-1^{th}$ classifier's task is to guess the value of $\Ixi(\xbb_{lb}(i))$, and the $2i^{th}$ classifier's task is to guess the position of $\xbb_{lb}(i)$
  We first compute the backward system $\bar\Sigma_B$ using Eq.~\eqref{eq:BW_sys}. We then compute the training datasets for both the forward and backward systems $\bar\Sigma$ and $\bar\Sigma_B$. For %the forward-in-time system 
  $\bar\Sigma$, let $g_{FU}\colon\Xb\times\Ub\rightarrow \Xb$ and $g_{FL}\colon\Xb\times\Ub\rightarrow \Xb$ denote the mappings from the state-input pair $(\xbb,\ubb)\in\Xb\times\Ub$ into the corresponding upper-right and lower-left corners of the rectangular reachable set from $(\xbb,\ubb)$. We define
  $z_F\colon \Xb\times \Ub\rightarrow\{0,1\}^{2\sum_{i=1}^n|\Xb(i)|}$ with $|\Xb(i)|$ being the cardinality of the projection of $\Xb$ along the $i^{\text{th}}$ axis and $z_F(\xbb,\ubb)(l)=1$ if and only if
\begin{align*}
	& l=2\sum_{k=1}^{i-1}|\Xb(k)|+\Ixi(g_{FL}(\xbb,\ubb)(i)) \text{ or } l=2\sum_{k=1}^{i-1}|\Xb(k)|+|\Xb(i)|+\Ixi(g_{FU}(\xbb,\ubb)(i)),
\end{align*}
for some $i\in\{1,2,\ldots,n\}$. The indexing function $\Ixi\colon \Xb(i)\rightarrow [1;|\Xb(i)|]$ maps every element of $\Xb(i)$ into a \emph{unique} integer index in the interval $[1;|\Xb(i)|]$.
%which assigns every $\xbb\in\Xb$ to a $n-$dimensional vector, whose $i^{th}$ entry's value is captured by the \emph{component-wise} indexing function  $\Ixi\colon\Xb(i)\mapsto [1;|\Xb(i)|]$, which maps every point in $\Xb(i)$ into a \emph{unique} integer index in the interval $[1;|\Xb(i)|]$.
The training dataset for $\bar\Sigma$ is defined as
\begin{equation}
\label{eq:FW_ds_cls}
\ds_F\coloneq
	\{(\xbb,\ubb, z_F(\xbb,\ubb)) \,|\, \xbb\in\Xb \text{ and }\ubb\in \Ub\}.
\end{equation}
Intuitively, each element of the dataset $\ds_F$ contains a state-input pair $(\xbb,\ubb)$ and a vector $\boldsymbol h\in \{0,1\}^{2\sum_{i=1}^n|\Xb(i)|}$ that has $1$ only at the entries corresponding to $\Ixi(g_{FL}(\xbb,\ubb)(i))$ and $\Ixi(g_{FU}(\xbb,\ubb)(i))$ for $i\in\{1,2,\ldots,n\}$.
%\begin{align}\label{eq:FW_ds_cls}
%\ds_F\coloneq&
%\{(\xbb,\ubb, \hb)\in\Xb\times\Ub\times \{0,1\}^{2\sum_{i=1}^n|\Xb(i)|}\mid\nonumber\\
%& \llbracket g_{FL}(\xbb,\ubb),g_{FU}(\xbb,\ubb)\rrbracket_{\eta_x}=T_F(\xbb,\ubb) \Leftrightarrow \nonumber\\
%& (\bigwedge_{i=1}^n \hb(2\sum_{k=1}^{i-1}|\Xb(k)|+\Ixi(g_{FL}(\xbb,\ubb)(i)))=1\nonumber\\
%&\bigwedge_{i=1}^n \hb(2\sum_{k=1}^{i-1}|\Xb(k)|+|\Xb(i)|+\Ixi(g_{FU}(\xbb,\ubb)(i)))=1 ,
%\end{align}
Similarly, we define
$z_B\colon \Xb\times \Ub\rightarrow\{0,1\}^{2\sum_{i=1}^n|\Xb(i)|}$ for $\bar\Sigma_B$ such that $z_B(\xbb,\ubb)(l)=1$ if and only if
\begin{align*}
	& l=2\sum_{k=1}^{i-1}|\Xb(k)|+\Ixi(g_{BL}(\xbb,\ubb)(i))\text{ or } l=2\sum_{k=1}^{i-1}|\Xb(k)|+|\Xb(i)|+\Ixi(g_{BU}(\xbb,\ubb)(i)),
\end{align*}
for some $i\in\{1,2,\ldots,n\}$.
The training dataset for the backward dynamics is also defined similarly as
\begin{equation}
\label{eq:BW_ds_cls}
	\ds_B\coloneq
	\{(\xbb,\ubb, z_B(\xbb,\ubb)) \,|\, \xbb\in\Xb \text{ and }\ubb\in \Ub\}.
\end{equation}
% Similarly, we define the training dataset for the backward-in-time system as
%\begin{align}\label{eq:BW_ds_cls}
%	\ds_B\coloneq&
%	\{(\xbb,\ubb, \hb)\in\Xb\times\Ub\times \{0,1\}^{2\sum_{i=1}^n|\Xb(i)|}\mid\nonumber\\
%	& \llbracket g_{BL}(\xbb,\ubb),g_{BU}(\xbb,\ubb)\rrbracket_{\eta_x}=T_B(\xbb,\ubb) \Leftrightarrow \nonumber\\
%	& (\bigwedge_{i=1}^n \hb(\sum_{k=1}^{i-1}|\Xb(k)|+\Ixi(g_{BL}(\xbb,\ubb)(i)))=1\nonumber\\
%	&\bigwedge_{i=1}^n \hb(\sum_{k=1}^{i-1}|\Xb(k)|+|\Xb(i)|+\Ixi(g_{BU}(\xbb,\ubb)(i)))=1 ,
%\end{align}

Once the training datasets are ready, we train the neural networks $\NN_F$ and $\NN_B$  respectively on the datasets $\ds_F$ and $\ds_B$. Note that the output layer of $\NN_F$ and $\NN_B$ will be a vector of size $2\sum_{i=1}^n|\Xb(i)|$, while the final output of the representations are of size $2n$ (cf. Fig.~\ref{fig:NN_TS_cls}). These final outputs give an approximation of the coordinates of the lower-left and upper-right corners of the reachable set corresponding to the pair $(\xbb,\ubb)$.
%, i.e., $g_{FL}(\cdot,\cdot)(i)$ and $g_{FU}(\cdot,\cdot)(i)$ for the forward system, and $g_{BL}(\cdot,\cdot)(i)$ and $g_{BU}(\cdot,\cdot)(i)$ for the backward system, for $1\leq i \leq n$.
%In particular, let $\boldsymbol y = \NN_F(\xbb,\ubb)$ be the output of the neural network for the forward dynamics. Then, for each $1\leq i \leq n$, we have $\xbb_{FL}(\xbb,\ubb)(i)=\Ixi^{-1}(argmax(\boldsymbol y_{l,i}))$, where $\boldsymbol y_{l,i}\in\reals^{|\Xb(i)|}$, $\boldsymbol y_{l,i}(j)=\boldsymbol y(2*\sum_{k=1}^{i-1}|\Xb(k)|+j)$ for every $1\leq j\leq |\Xb(i)|$, and $\xbb_{ub}(\xbb,\ubb)(i)=\Ixi^{-1}(argmax(\boldsymbol y_{u,i}))$, where $\boldsymbol y_{u,i}(j)\in\reals^{|\Xb(i)|}=\boldsymbol y(2*\sum_{k=1}^{i-1}|\Xb(k)|+|\Xb(i)|+j)$ for every $1\leq j\leq |\Xb(i)|$.
Note that, because $\Xb$ was computed by equally partitioning over $X$, both the indexing function $\Ixi$ and its inverse can be implemented in a memory-efficient way using \textsf{floor} and \textsf{ceil} operators. %\Sadegh{I don't understand why you need this indexing function to be analytical? It operates on a discrete space... Perhaps we should eliminate this sentence.}
We then evaluate the performance of the trained neural networks $\NN_F$ and $\NN_B$.
%Once the neural networks $\NN_F$ and $\NN_B$ are trained, we need to evaluate their performance. In particular, it is very important to find all the state and input pairs at which using the trained neural networks produces an output which does not contain a state which is reachable under the original dynamics.
Let $\rho_{FL}(\xbb,\ubb)$ and $\rho_{FU}(\xbb,\ubb)$ denote respectively the estimated lower-left and upper-right corners of the reachable set estimated by $\NN_F$. Define $\rho_{BL}(\xbb,\ubb)$ and $\rho_{BU}(\xbb,\ubb)$ similarly for $\NN_B$, and let the set of misclassified state-input pairs be
\begin{align}
	E_F&\coloneq \{(\xbb,\ubb)\in\Xb\times\Ub\mid\ T_F(\xbb,\ubb)\setminus \llbracket \rho_{FL}(\xbb,\ubb), \rho_{FU}(\xbb,\ubb)\rrbracket_{\eta_x}\neq \emptyset\}\nonumber\\
	E_B&\coloneq\{(\xbb,\ubb)\in\Xb\times\Ub\mid\ T_B(\xbb,\ubb)\setminus \llbracket \rho_{BL}(\xbb,\ubb), \rho_{BU}(\xbb,\ubb)\rrbracket_{\eta_x}\neq \emptyset\}.\label{eq:miss}
\end{align}
 The \emph{soundness error} of $\NN_F$ and $\NN_B$ can be considered as their misclassification rate:
\begin{equation}\label{eq:cls_soundness_errors}
	err_F\coloneq\frac{|E_F|}{|\Xb\times\Ub|} \quad\text{ and }\quad err_B\coloneq\frac{|E_B|}{|\Xb\times\Ub|}.
\end{equation}
%In case the misclassification rates for the trained classifiers are small, we can store the neural networks $\NN_F$ and $\NN_B$ together with the set of transitions corresponding to the state and input pairs at which they fail to produce sound over-approximation of reachable set. We define 
For the misclassified %state-input 
pairs in $E_F$ and $E_B$, we extract the related transitions in the abstraction:
\begin{align}\label{eq:TS_to_be_stored}
	\tilde{\NN}_F & \!\coloneq\!\set{(\xbb,\ubb,\xbb')\!\mid\! (\xbb,\ubb)\!\in\! E_F,\xbb'\!\in\! T_F(\xbb,\ubb)},
	\tilde{\NN}_B  \!\coloneq\!\set{(\xbb,\ubb,\xbb')\!\mid\! (\xbb,\ubb)\in E_B,\xbb'\!\in\! T_B(\xbb,\ubb)}\!.\!
\end{align}
%Let $H_F\coloneqq \llbracket\xbb_{FL}(\xbb,\ubb),\xbb_{FU}(\xbb,\ubb)\rrbracket_\eta$ and $H_B\coloneqq \llbracket\xbb_{BL}(\xbb,\ubb),\xbb_{BU}(\xbb,\ubb)\rrbracket_\eta$ correspond to the reachable set approximated by $\NN_F$ and $\NN_B$, respectively. 
%So, the final \emph{corrected} representations ($R_F$, $R_B$) defined in the following way:
Finally, we \emph{correct} the output of neural network representations to maintain soundness
%In order to maintain soundness, we need to . For the forward dynamics, we store the following as the final representation 
\begin{align}
	\label{eq:R_F_def_cls}
	 R_F(\xbb,\ubb)\!\coloneq\!
	\begin{cases}
		\llbracket\rho_{FL}(\xbb,\ubb),\rho_{FU}(\xbb,\ubb)\rrbracket_{\eta_x}\!\!\!\! &\text{if}\; (\xbb,\ubb)\!\notin\! E_F\\
		\tilde{\NN}_F(\xbb,\ubb)  &\text{if}\;(\xbb,\ubb)\!\in\! E_F,
	\end{cases}\\
	\label{eq:R_B_def_cls}
	 R_B(\xbb,\ubb)\!\coloneq\!
	\begin{cases}
		\llbracket\rho_{BL}(\xbb,\ubb),\rho_{BU}(\xbb,\ubb)\rrbracket_{\eta_x} \!\!\!\! &\text{if}\; (\xbb,\ubb)\!\notin\! E_B\\
		\tilde{\NN}_B(\xbb,\ubb) &\text{if}\;(\xbb,\ubb)\!\in\! E_B.
	\end{cases}
\end{align}
%Similarly, for the backward dynamics
Note that these corrected neural representations are memory efficient only if the misclassification rates are small, i.e., the size of $E_F$ and $E_B$ are small compared with $\Xb\times\Ub$.
\begin{comment}
The following lemma states that, similar to the regression-based formulation, using the classification-based formulation, we can train \emph{sound} representations for both forward and backward dynamics.
\begin{lemma}\label{lem:TS_soundness_cls}
	Let $T_F^N$ and $T_B^N$ be the transition systems computed by \eqref{eq:R_F_def_cls} and \eqref{eq:R_B_def_cls}. Then, we have $T_F\subseteq T_F^N$ and $T_B\subseteq T_B^N$.
\end{lemma}
\end{comment}

% !TEX root = main.tex

\subsection{On-the-Fly Synthesis}
\label{subsec:on-the-fly_synthesis}
\begin{algorithm}[t]
	\caption{Controller synthesis algorithm}
	\label{alg:synthesis}
	\KwData{Set $W_0\subseteq \Xb$ and the corrected neural representations $R_F$ and $R_B$}
	Initialize $C\leftarrow \emptyset$,  $P_0\leftarrow W_0$, $\Gamma_0\leftarrow \emptyset$ and $i\leftarrow 0$\\
	%$P_0\leftarrow W_0$\\
	%$\Gamma_0\leftarrow \emptyset$\\
	%$i\leftarrow 0$\\
	\While{$W_i\neq \emptyset$}{
		Compute the candidate pool $S_{i}$ using Eq.~\eqref{eq:candid_pool}\\
		Compute the set of new winning states $W_{i+1}$ using Eq.~\eqref{eq:NN_post} and add them to the winning set ($P_{i+1}\leftarrow P_i\cup W_{i+1}$)\\
		Compute the set of new state-input pairs $\Gamma_{i+1}$ using Eq.~\eqref{eq:controller_added_pairs} and add them to the controller ($C\leftarrow C\cup \Gamma_{i+1}$)\\
		%$P_{i+1}\leftarrow P_i\cup W_{i+1}$\\
		%$C\leftarrow C\cup \Gamma_{i+1}$\\
		$i\leftarrow i+1$}
	$L\leftarrow P_{i}$\\
	\KwResult{Controller $C$ and its winning set $L$}
\end{algorithm}
In the previous subsection, we described the computation of the compressed representations corresponding to the forward and backward dynamics for finite abstractions. %It remains to show how 
In this subsection, we use these representations in order to synthesize formally correct controllers. 

Our synthesis procedure is provided in Alg.~\ref{alg:synthesis}. It takes the representations $R_F$ and $R_B$ to synthesize a controller which fulfills the given reach-avoid specification. %Alg.~\ref{alg:synthesis} gives a detailed description of our proposed synthesis algorithm. 
Let
$$W_0=\set{\xbb\in\Xb\mid \llbracket\xbb-\boldsymbol{\eta_x}/2, \xbb+\boldsymbol{\eta_x}/2\rrbracket\subseteq \goal}$$
be a discrete under-approximation of the target set $\goal$. We take $W_0$ as the input and perform a fixed-point computation to solve the given reach-avoid game. We initialize the winning set and controller with $P_0=W_0$ and $C=\emptyset$, and in each iteration, we add the new winning set of states and state-input pairs, respectively, into the overall winning set and the controller, until no new state is found ($W_{i+1}=\emptyset$). 

Let $W_i$ be the set of \emph{new} winning states in the beginning of the $i^{th}$ iteration. Further, we denote the set of winning states in the beginning of the $i^{th}$ iteration by $P_i=\bigcup_{k=0}^i W_k$. In every iteration, for every $\xbb\in W_i$ and $\ubb\in \Ub$, we compute the backward over-approximating $\ell_\infty$ ball and discretize it to get the \emph{candidate pool} $S_{i}$ defined as 
\begin{equation}\label{eq:candid_pool}
	S_i\coloneq \bigcup_{\ubb\in \Ub} Y_i(\ubb),
\end{equation}
with
%\begin{align}\label{eq:candid_pool}
%	&S_{i}=\bigcup_{\xbb\in W_i, \ubb\in \Ub}(\Xb\cap \nonumber\\
%	&\bar K(R_B^c(\xbb, \ubb)-R_B^r(\xbb, \ubb),R_B^c(\xbb, \ubb)+R_B^r(\xbb, \ubb))),
%\end{align}
\begin{align*}
	&Y_{i}(\ubb)\coloneq\bigcup_{\xbb\in W_i}(\Xb\cap
	\bar K(R_B^c(\xbb, \ubb)-R_B^r(\xbb, \ubb),R_B^c(\xbb, \ubb)+R_B^r(\xbb, \ubb))),
\end{align*}
where $R_B^c(\cdot, \cdot)$, $R_B^r(\cdot, \cdot)$ denote the components of the output of $R_B(\cdot, \cdot)$ corresponding to the center and radius of the $\ell_\infty$ ball, respectively.
Note that we compute the candidate pool by running $R_B$ over $W_i$ \emph{instead of} $P_i$. This is computationally beneficial, because $|W_i|\leq |P_i|$. Next lemma shows that $S_i$ includes the \emph{whole} set of new winning states $W_{i+1}$.
\begin{lemma}\label{lem:candidate_pool}
	%Let $W_0=\set{\xbb\in\Xb\mid \llbracket\xbb-\boldsymbol{\eta_x}/2, \xbb+\boldsymbol{\eta_x}/2\rrbracket\subseteq \goal}$, and for every $i\geq 1$, we have $W_{i}= \Pre(L_i)\setminus L_i$, where $L_i=\bigcup_{k=0}^{i} W_k$. Further, 
	Let the set of candidates $S_{i}$ be as defined in Eq.~\eqref{eq:candid_pool}. Then, we have $W_{i+1}\subseteq S_{i}$ for all $i\geq 0$.
\end{lemma}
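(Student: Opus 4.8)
The plan is to take an arbitrary new winning state $\xbb'\in W_{i+1}$ and trace it one step backward to a state of the most recently added layer $W_i$, and then invoke soundness of the backward representation to conclude that $\xbb'$ was already placed into the candidate pool. Concretely, since $W_{i+1}$ is the fresh layer of the fixed-point iteration, $\xbb'\in\Pre(P_i)\setminus P_i$, so there is a witnessing input $\ubb\in\Ub$ with $T_F(\xbb',\ubb)\subseteq P_i$ while $\xbb'\notin P_i$. The whole argument then rests on two facts: (i) that the witnessing successor set must actually meet $W_i$ and not merely $P_{i-1}$, and (ii) that $R_B$, after correction, never drops a genuine backward edge. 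Fact (ii) is exactly Lemma~\ref{lem:TS_soundness}, so the real work is in (i).

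For step (i), the \emph{freshness} argument, I would use the monotonicity of $\Pre$ together with minimality of the layering $P_i=\bigcup_{k=0}^{i}W_k$. Suppose, for contradiction, that $T_F(\xbb',\ubb)\cap W_i=\emptyset$; then $T_F(\xbb',\ubb)\subseteq P_i\setminus W_i=P_{i-1}$, which gives $\xbb'\in\Pre(P_{i-1})$ and hence $\xbb'\in P_i$ (as $\xbb'\notin\avoid$), contradicting $\xbb'\notin P_i$. Therefore there exists $\xbb\in T_F(\xbb',\ubb)\cap W_i$. I expect this freshness step to be the main obstacle, both because it needs the precise least-fixed-point meaning of the ``new'' layer encoded in the definition of $W_{i+1}$ (Eq.~\eqref{eq:NN_post}), and because one must be careful that the witnessing successor $\xbb$ is a \emph{true} $T_F$-edge rather than a spurious edge introduced by the forward representation; only then does the backward soundness of $R_B$ legitimately apply to $\xbb'$.

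Given such an $\xbb\in W_i$ with $(\xbb',\ubb,\xbb)\in T_F$, the remainder is a direct chain of inclusions. By the forward/backward duality in Eq.~\eqref{eq:BW_sys}, $(\xbb,\ubb,\xbb')\in T_B$, i.e.\ $\xbb'\in T_B(\xbb,\ubb)$. By Lemma~\ref{lem:TS_soundness}, $T_B\subseteq T_B^N$, so $\xbb'\in T_B^N(\xbb,\ubb)$, which by Eq.~\eqref{eq:TS_NN_B} means
$$\xbb'\in\bar K\bigl(\NN_B^c(\xbb,\ubb)-\NN_B^r(\xbb,\ubb)-\eBb,\;\NN_B^c(\xbb,\ubb)+\NN_B^r(\xbb,\ubb)+\eBb\bigr).$$
Substituting the corrected components $R_B^c=\NN_B^c$ and $R_B^r=\NN_B^r+\eBb$ from Eq.~\eqref{eq:R_B}, this is precisely $\xbb'\in\bar K(R_B^c(\xbb,\ubb)-R_B^r(\xbb,\ubb),\,R_B^c(\xbb,\ubb)+R_B^r(\xbb,\ubb))$. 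Since $\xbb\in W_i$ and $\xbb'\in\Xb$, the definition of $Y_i(\ubb)$ yields $\xbb'\in Y_i(\ubb)\subseteq S_i$ by Eq.~\eqref{eq:candid_pool}. As $\xbb'\in W_{i+1}$ was arbitrary, $W_{i+1}\subseteq S_i$, which is the claim.
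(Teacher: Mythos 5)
Your proposal is correct and takes essentially the same approach as the paper: the paper argues by contradiction, using exactly your freshness step ($T_F(\xbb^\ast,\ubb^\ast)\subseteq P_i\setminus W_i=P_{i-1}$ forces $\xbb^\ast\in P_i$, a contradiction) and compressing your duality-plus-soundness chain into the single asserted implication that $\xbb^\ast\notin S_i$ yields $T_F(\xbb^\ast,\ubb^\ast)\cap W_i=\emptyset$. Your contrapositive phrasing merely makes explicit the appeal to Eq.~\eqref{eq:BW_sys}, Lemma~\ref{lem:TS_soundness}, and Eq.~\eqref{eq:R_B} that the paper leaves implicit.
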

\begin{proof}
	We prove this lemma by contradiction. Suppose that $W_{i+1}\not\subseteq S_i$. Then there exists at least one $\xbb^\ast\in W_{i+1}\setminus S_i$. Since $\xbb^\ast\in W_{i+1}$, we know that there exists at least one $\ubb^\ast\in\Ub$ such that $T_f(\xbb^\ast,\ubb^\ast)\subseteq P_i$ and $\xbb^\ast\notin P_i$. Moreover, since $\xbb^\ast\notin S_i$, by Eq.~\eqref{eq:candid_pool} we get $T_f(\xbb^\ast,\ubb^\ast)\cap W_i=\emptyset$. %Because $P_i=\bigcup_{k=0}^{i}W_k$, we get $T_f(\xbb^\ast,\ubb^\ast)\subseteq \bigcup_{k=0}^{i-1}W_k=P_{i-1}$.
	So, $T_f(\xbb^\ast,\ubb^\ast)\subseteq P_i\setminus W_i=P_{i-1}$. This gives $\xbb^\ast\in P_i$, which is a contradiction. This completes the proof.% of the first part.
\end{proof}

 Now, we can use $R_F$, which represents the forward transition system, in order to choose the \emph{legitimate} candidates out of $S_{i}$ and add the new ones to $W_{i+1}$.
 Let
 $$A=\set{\xbb\in\Xb\mid \llbracket\xbb-\boldsymbol{\eta_x}/2, \xbb+\boldsymbol{\eta_x}/2\rrbracket\cap \avoid\neq \emptyset}$$
 be a discrete over-approximation over the set of obstacles.
 The next lemma states that we can use the representation $R_F$ to compute $W_{i+1}$. 
 %outcome of the operator $\Pre(\cdot)$.
\begin{lemma}\label{lem:legitimate_selection}
	%Let $W_0=\set{\xbb\in\Xb\mid \llbracket\xbb-\boldsymbol{\eta_x}/2, \xbb+\boldsymbol{\eta_x}/2\rrbracket\subseteq \goal}$, and for every $i\geq 1$, we have $W_{i}= \Pre(L_i)\setminus L_i$, where $L_i=\bigcup_{k=0}^{i} W_k$. Further, let the set of candidates $S_{i}$ to be as defined in Eq.~\eqref{eq:candid_pool}. Then, we have
	%\begin{enumerate}
		%\item $W_{i+1}\subseteq S_{i}$ for every $i\geq 0$, and
		The set of states added to the winning set in the $i^{th}$ step can be computed as \begin{align}\label{eq:NN_post}
		&W_{i+1} =\{\xbb\in S_{i}\mid \exists \ubb\in\Ub\; s.t.\bar K(R_F^c(\xbb, \ubb)-R_F^r(\xbb, \ubb),R_F^c(\xbb, \ubb)+R_F^r(\xbb, \ubb))\subseteq P_i	
		\}\setminus (P_i\cup A).
	\end{align} 
%and $W_{i+1}=(\Pre(W_i)\setminus(W_i))\cup W_i$.
%\end{enumerate}
\end{lemma}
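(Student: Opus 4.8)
The plan is to reduce the claimed identity to two set inclusions and to lean on the already-established Lemma~\ref{lem:candidate_pool} for the non-trivial direction. First I would record the key observation that the bracketed quantity in Eq.~\eqref{eq:NN_post} is exactly the neural forward successor set. By the correction in Eq.~\eqref{eq:R_F} we have $R_F^c=\NN_F^c$ and $R_F^r=\NN_F^r+\eFb$, so that $R_F^c-R_F^r=\NN_F^c-\NN_F^r-\eFb$ and $R_F^c+R_F^r=\NN_F^c+\NN_F^r+\eFb$, whence
$$\bar K\bigl(R_F^c(\xbb,\ubb)-R_F^r(\xbb,\ubb),\,R_F^c(\xbb,\ubb)+R_F^r(\xbb,\ubb)\bigr)=T_F^N(\xbb,\ubb),$$
with $T_F^N$ as defined in Eq.~\eqref{eq:TS_NN_F}. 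Hence the right-hand side of Eq.~\eqref{eq:NN_post} is precisely the set of states $\xbb\in S_i$, lying outside $P_i\cup A$, that admit an input $\ubb\in\Ub$ forcing $T_F^N(\xbb,\ubb)\subseteq P_i$.

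With this reformulation, I would take the meaning of the newly-won set $W_{i+1}$ to be the same winning test carried out over the whole grid $\Xb$ rather than over the pool $S_i$, that is $W_{i+1}=\{\xbb\in\Xb\mid \exists\,\ubb\in\Ub,\ T_F^N(\xbb,\ubb)\subseteq P_i\}\setminus(P_i\cup A)$, and then prove the two inclusions. The inclusion of the formula into $W_{i+1}$ is immediate, since $S_i\subseteq\Xb$ only shrinks the quantification over the first variable. For the soundness content I would invoke Lemma~\ref{lem:TS_soundness}: because $T_F\subseteq T_F^N$, any state passing the $T_F^N$-test also satisfies $T_F(\xbb,\ubb)\subseteq P_i$, so the states collected are genuinely winning and the name is justified.

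The substantive direction is the reverse inclusion $W_{i+1}\subseteq\{\text{formula}\}$, which asserts that restricting the winning test from $\Xb$ to the candidate pool $S_i$ discards no winning state. This is exactly the statement of Lemma~\ref{lem:candidate_pool}, so I would simply apply it: any $\xbb\in W_{i+1}$ already lies in $S_i$ and therefore survives the restriction and belongs to the right-hand side of Eq.~\eqref{eq:NN_post}. Intersecting with $\xbb\notin P_i\cup A$ is common to both sides, so combining the inclusions yields the claimed equality in a few lines once Lemma~\ref{lem:candidate_pool} is in hand.

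The step I expect to be the main obstacle is precisely this guarantee that $S_i$ loses no winning state, i.e.\ the content underlying Lemma~\ref{lem:candidate_pool}. The delicacy is that the forward winning test uses the forward representation $T_F^N$, while the candidate pool is generated from the independently-trained backward representation $T_B^N$, and there is no \emph{a priori} reason for $T_B^N$ to be the exact transpose of $T_F^N$. What rescues the argument is a monotonicity observation combined with backward soundness: a genuinely \emph{new} winning state $\xbb^\ast$ cannot have all its successors already inside the older layer $P_{i-1}$, since otherwise it would have been admitted at the previous iteration, contradicting $\xbb^\ast\notin P_i$; thus $\xbb^\ast$ must have a successor in the fresh layer $W_i$, and the soundness $T_B\subseteq T_B^N$ of Lemma~\ref{lem:TS_soundness} together with $T_B$ being the exact transpose of $T_F$ then places $\xbb^\ast$ in $S_i$. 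Making this monotonicity-plus-soundness bridge airtight is the crux; the present lemma is then a direct corollary of Lemma~\ref{lem:candidate_pool} and the identity relating $\bar K(R_F^c-R_F^r,R_F^c+R_F^r)$ to $T_F^N$.
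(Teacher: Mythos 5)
Your proposal is correct and follows essentially the same route as the paper's own proof: both reduce Eq.~\eqref{eq:NN_post} to two inclusions, dispose of the containment of the right-hand side in $W_{i+1}$ by definition (with soundness supplied by Lemma~\ref{lem:TS_soundness}), and obtain the substantive inclusion $W_{i+1}\subseteq S_i$ from Lemma~\ref{lem:candidate_pool}, whose monotonicity-plus-backward-soundness content you correctly single out as the crux (your last paragraph in fact re-derives the argument of Lemma~\ref{lem:candidate_pool}, which is redundant once you invoke it, but harmless). The one genuine difference is to your credit: by explicitly identifying $\bar K(R_F^c(\xbb,\ubb)-R_F^r(\xbb,\ubb),R_F^c(\xbb,\ubb)+R_F^r(\xbb,\ubb))$ with the successor set of $T_F^N$ from Eq.~\eqref{eq:TS_NN_F} and pinning $W_{i+1}$ down as the $T_F^N$-test over all of $\Xb$, you sidestep a soft spot in the paper's contradiction step, which deduces from $\xbb^\ast\in S_i$ that $T_F(\xbb^\ast,\ubb^\ast)\subseteq W_i$ for some $\ubb^\ast$ and from $\xbb^\ast\notin G$ that $T_F(\xbb^\ast,\ubb^\ast)\not\subseteq W_i$, thereby conflating the true relation $T_F$ with its corrected over-approximation (a state can pass the $T_F$-test yet fail the $T_F^N$-test, since the corrected ball may protrude outside $P_i$); under your reading the claimed equality is exact and the two inclusions close cleanly.
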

\begin{proof}
	To prove this lemma, we denote $G=\{\xbb\in S_{i}\mid \exists \ubb\in\Ub\; s.t.\;\bar K(R_F^c(\xbb, \ubb)-R_F^r(\xbb, \ubb),R_F^c(\xbb, \ubb)+R_F^r(\xbb, \ubb))\subseteq P_i	
	\}\setminus (P_i\cup A)$, and show $W_{i+1}\subseteq G$ and $G\subseteq W_{i+1}$. The second direction ($G\subseteq W_{i+1}$) holds by definition. To prove the first direction ($W_{i+1}\subseteq G$), we note that $G\subseteq  S_{i}$ and further, by the result of Lemma.~\ref{lem:candidate_pool}, we have $W_{i+1}\subseteq S_{i}$. Assume $W_{i+1}\not\subseteq G$. Then there should exist at least one $\xbb^\ast\in W_{i+1}\setminus G$. Note that $\xbb^\ast\in S_{i}\setminus G$. Since $\xbb^\ast\in S_i$, we get that there exists at least one $\ubb^\ast\in\Ub$ for which $T_F(\xbb^\ast,\ubb^\ast)\subseteq W_i$. Also, because $\xbb^\ast\notin G$, we have $T_F(\xbb^\ast,\ubb^\ast)\not\subseteq W_i$, which is a contradiction. Therefore, $W_{i+1}\subseteq G$. Hence the proof ends.
\end{proof}
%Based on the above theorem, we only need to run the representations $R_B$ and consequently $R_F$ over the set of new winning states $W_i$ and not over the whole winning set $L_i$. This leads into a lower runtime, as $|W_i|<<|L_i|$.%  which is much smaller in size compared to $L_i$.
In each iteration, we calculate $\Gamma_i$, which is the set of new state-input pairs that must be added into the controller, and is defined as
\begin{align}\label{eq:controller_added_pairs}
	\Gamma_{i+1}=&\{(\xbb,\ubb)\mid
		\xbb\in W_{i+1},\;\bar{K}(R_F^c(\xbb, \ubb)-R_F^r(\xbb, \ubb),R_F^c(\xbb, \ubb)+R_F^r(\xbb, \ubb))\subseteq P_i	\}.
\end{align}
Finally, If $W_{i+1}=\emptyset$, we can terminate the computations as we already have computed the winning set and the controller. Otherwise, we add $W_i$ and $\Gamma_i$ into the overall winning set ($P_{i+1}\leftarrow P_i\cup W_{i+1} $) and controller ($C\leftarrow C\cup \Gamma_{i+1} $)  and restart the depicted process.
\begin{comment}
\subsection{Accelerating synthesis for continuous control systems}
It should pointed out that although our proposed synthesis method does not require storing huge transition systems into the memory, it adds to the runtime. Specifically, in each iteration of fixed point computation, $\mathcal O(|W_i\setminus L|^2\times |\Uh|)$ operations are required. For continuous control systems, we can use open-loop planning to reduce the on-the-fly computations in each iteration and hence speed-up the synthesis process. Given an initial state $x_\init\in X$ and a goal state $x_f\in X$, one can use standard techniques to compute a \emph{nominal trajectory} $\rho:[0;N]\rightarrow X$ of length $N$ such that $\rho[0]=x_\init$ and $\rho[N]\approx x_f$. This computation is reffered to as \emph{open-loop planning} and only takes the nominal dynamics ($\dot x(t)=f_c(x(t),u(t))$) into account. To generate nominal trajectories, we use the method introduced in \cite{howell2019altro} which introduces a fast and scalable tool named ALTRO (see Appendix~\ref{app:Altro}).
Having the nominal trajectory $\rho$ computed, we can modify the state space such that it is only limited to the $\epsilon-$tube around the nominal trajectory. %We denote this reduced state space by $X_\rho$. 
\end{comment}

% !TEX root = main.tex

\section{Deployment}\label{sec:deployment}

Once the controller $C$ is computed such that $C\parallel \bar\Sigma$ realizes the given specification $\Phi$, we need to deploy $C$ onto an embedded controller platform, e.g., a microcontroller. Since such embedded controller platforms generally have %most of the control chips in industry are equipped with 
a small on-board memory, we would like to minimize the size of the stored controller.

%For every discrete state $\xbb$, we refer to the set of control inputs $C(\xbb)$ as the corresponding \emph{set of valid control inputs}.
We define the \emph{set of valid control inputs} corresponding to $\xbb$ as $C(\xbb)=\set{\ubb\mid(\xbb,\ubb)\in C}$. The approach we proposed for finding representations for the finite abstractions may not work, since %the set of valid control inputs may not always be characterizable by an $\ell_\infty$ or any other simple compact characterization, 
we are not allowed to over-approximate $C(\xbb)$, and thus the set of valid control inputs is not representable as a compact $\ell_\infty$ ball described by its center and radius. %Note that we can always characterize an over-approximation or under-approximation of $C(\xbb)$ using $\ell_\infty$ balls; however, over-approximating $C(\xbb)$ is not allowed as it enables taking \emph{invalid} actions, and for under-approximation, as discussed later, discarding an arbitrary subset of $C(\xbb)$ may lead into weak performance of the controller.
 %We explain this issue using an example.
%One especial case wherein $C(\xbb)$ cannot be represented by an $\ell_\infty$ is when $C(\xbb)$ is disconnected.
%The following example describes a scenario wherein the set of valid control inputs is disjoint and  hence is not characterizable by a $\ell_\infty$ ball. 
The following example illustrates a disconnected $C(\xbb)$, which cannot be represented by an $\ell_\infty$ ball.
\begin{comment}
\begin{figure}
	\centering
\begin{tabular}{ll}
	\includegraphics[scale=.8]{./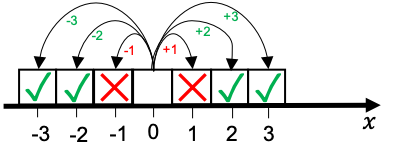}
	&
	\includegraphics[scale=.25]{./figures/NN_cont.png}
\end{tabular}
\caption{Left: Illustration of a disconnected set of valid control inputs. Right: The configuration used in compression of controllers. Given a state $\xbb$, the representation produces a corresponding control input $\ubb$.}
\label{fig:deployment}
\end{figure}
\end{comment}
\begin{figure}
	\centering
		\includegraphics[scale=1]{}
	\caption{Illustration of a disconnected set of valid control inputs.}
	\label{fig:ex_disconnected}
\end{figure}
\begin{comment}
\begin{figure}
	\centering
	\includegraphics[scale=.7]{./figures/control_discontinuity.png}
	\caption{%Illustration of the set of states and control inputs for a one-dimensional vehicle. The white, red and green boxes correspond to the vehicle's initial state, obstacle, and safe states, respectively. The arrows exiting from the middle box correspond to different control input choices at the initial state. The number written on each arrow determines the jump distance enabled by taking each control input. The actions which take the car from its initial state into the obstacle states, are colored red and the rest are colored green.
	Illustration of a disconnected set of valid control inputs}
	\label{fig:control_discontinuity}	
\end{figure}
\end{comment}
\begin{figure}
	\centering
	\includegraphics[scale=.23]{}
	\caption{The configuration used in compression of controllers. Given a state $\xbb$, the representation produces a corresponding control input $\ubb$.}
	\label{fig:deployment}
\end{figure}
\begin{example}
\begin{comment}
Consider a vehicle which can only move along the $\boldsymbol x$ axis as shown in Fig.~\ref{fig:control_discontinuity}. Let the set of speeds for this vehicle be $\Ub=\set{\pm i\mid i\in \nats^{\leq 3}}$, where each $i\in \Ub$ determines the distance the vehicle jumps in the right and left directions in every time-point, corresponding to the positive and negative values, respectively. Starting from the white middle box, in order to not hitting the red colored obstacle states, the set of valid control inputs would be $\Ub\setminus\set{0,\;\pm1}$, which is a disjoint set and cannot be characterized by, for instance, a single $\ell_\infty$ ball.
\end{comment}
Consider a system with one-dimensional state and input spaces ($n=m=1$). Fig.~\ref{fig:ex_disconnected} illustrates the set of transitions starting from the white middle box ($\xbb=0$). Let the boxes with green check mark and red cross mark correspond to the target and obstacle states and $C$ be the controller for the corresponding reach-avoid specification. Then, we have $\set{(0,2),(0,3),(0,-2),(0,-3)}\subseteq C$ and $C(0)=\set{-2,-3,2,3}$. It is clear that $C(0)$ is a disconnected set, which is not characterizable by an $\ell_\infty$ ball.
\end{example}
\begin{comment}
\begin{figure}
	\vspace{-.5cm}
	%\hspace{-2cm}
	%\includegraphics[scale=.25,width=.49\textwidth]{./figures/NN_TS_rgr.png}
	%\vspace{-0cm}
	%\hspace{-3cm}
	\includegraphics[scale=.2]{./figures/NN_cont.png}
	%\vspace{-.5cm}
	\caption{The configuration used in compression of controllers. Given a state $\xbb$, the representation produces a corresponding control input $\ubb$.}
	\label{fig:NN_cont_cnfigs}	
\end{figure}
\end{comment}
In contrast to the symbolic regression method proposed in \cite{ZAPREEV:2018}, we formulate the controller compression problem as a classification task, that is, we train a neural network which assigns every state to a list of scores over the set of control inputs, and picks the control input with the highest score. The configuration of the neural network is illustrated in Fig.~\ref{fig:deployment}. The justification for our formulation is that any representation for the controller can only perform well if it is trained over a dataset which respects the \emph{continuity property}, i.e., neighboring states are not mapped into control input values which are very different from each other. A representation that respects the continuity property corresponds to a low continuity index (see Eq.~\eqref{eq:cont_idx}). During the training phase, we keep all the valid control inputs and let the training process to choose which value respects the continuity property more, by minimization of the cost function. Therefore, our formulation automatically takes care of the \emph{redundancy} problem by mapping a neighborhood in the state space into close-in-value control inputs to respect the continuity requirement of the trained representation. The reason that our formulation does not correspond to a \emph{standard} classification setting is that during the training phase a \emph{non-uniform} number of  labels (corresponding to the control input values in the output stage of the neural network) per input (corresponding to the state values at the input layer of the neural network) are considered as valid, while we only will consider \emph{one} label---corresponding to the highest score---as the trained representation's choice during the runtime.
\begin{remark}
	In order to formulate the problem of finding a neural-network-based representation for the controller as a regression problem, first the training data must be \emph{pre-processed} such that the continuity property is respected, i.e., the set of valid control-inputs per each state is pruned so that neighboring states are mapped to close-in-value control inputs. However, this pre-processing is time consuming and does not work efficiently in practice (see, e.g., \cite{ZAPREEV:2018, GIRARD:2012}).
\end{remark}

\begin{algorithm}[t]
	\caption{Compression algorithm for the controller}
	\label{alg:controller-compression}
	\KwData{Controller $C$, learning rate $\lambda$}
	Compute the dataset $\ds_C$ using Eq.~\eqref{eq:cont_ds}\\
	Train the neural network $\NN_C$ on the dataset $\ds_C$ using the learning rate $\lambda$\\
	Compute the set of state-input pairs  $\tilde C$ using Eq.~\eqref{eq:cont2nd_to_be_stored}\\
	Compute $\hat C$ using Eq.~\eqref{eq:C_hat_def}\\
	\KwResult{Corrected neural representation $\hat C$}
	%\KwResult{$\NN_C$, $\tilde C$}
\end{algorithm}

Alg.~\ref{alg:controller-compression} summarizes the proposed procedure for computing a compressed representation for the original controller. %Usually, the size of control input set $|\Ub|$ is not too large. Therefore, we can define the training set as
In the first step, we need to store the training set
\begin{align}\label{eq:cont_ds}
	\ds_C =\{(\xbb,\boldsymbol h(\xbb))\mid &(\xbb,\ubb)\in C \Leftrightarrow \boldsymbol h(\xbb)(\Iu(\ubb))=1,\;(\xbb,\ubb)\notin C \Leftrightarrow \boldsymbol h(\xbb)(\Iu(\ubb))=0\},
\end{align}
where $\Iu\colon U\rightarrow [1;|\Ub|]$ is an \emph{indexing function} for the control set $\Ub$, which assigns every value in $\Ub$ into a \emph{unique} integer in the interval $[1;|\Ub|]$. Intuitively, each point in the dataset $\ds_C$ contains a state $\xbb\in L$ and a vector $\boldsymbol h(\xbb)$ which is of length $|\Ub|$ and has ones at the entries corresponding to the valid control inputs and zeros elsewhere.

Once the training dataset is ready, we can train a neural network $\NN_C$ which takes $\xbb\in\Xb$ as input and approximates $\Iu^{-1}(argmax(\boldsymbol h(\xbb)))$ in the output, where $\Iu^{-1}(\cdot)$ denotes the inverse of the indexing function used in Eq.~\eqref{eq:cont_ds}.
\begin{remark}
	Note that the output layer of $\NN_C$ has to be of size $|\Ub|$ and for every $\xbb\in L$, 
	we consider the value  $\Iu^{-1}(argmax(\NN_C(\xb)))$ as the final control input assigned by $\NN_C$ to the state $\xbb$. Moreover, because $\Ub$ was computed by equally partitioning over $U$, both the indexing function $\Iu$ and its inverse can be implemented in a memory-efficient way using \emph{\textsf{floor}} and \emph{\textsf{ceil}} functions. %specified as compact analytical mappings.
\end{remark}

Once the neural network $\NN_C$ is trained, we evaluate its performance %In particular, it is very important to 
by finding all the states $\xbb$ at which using $\NN_C$ produces an \emph{invalid} control input, i.e., %. For every state $\xbb\in L$, let the corresponding set of valid control inputs to be denoted as $C(\xbb)=\set{\ubb \in \Ub \mid (\xbb,\ubb)\in C}$. In order to maintain the guarantee provided by the original controller $C$, it is very important to \emph{correct} the output of the trained representation, so that it outputs a valid control input at \emph{every} state. We say that a trained representation $\NN_C$ makes mistake at the state $\xbb$ if we have $\Iu^{-1}(argmax(\NN_C(\xbb)))\notin C(\xbb)$. Then, we can define the set of states at which $\NN_C$ makes the wrong decision as
$$E=\set{\xbb\in L\mid \Iu^{-1}(argmax(\NN_C(\xbb)))\notin C(\xbb)}.$$ 
%where $\Iu^{-1}(\cdot)$ denotes the inverse of the indexing function used in Eq.~\eqref{eq:cont_ds}.
The misclassification rate of the trained classifier $\NN_C$ is defined as:
\begin{equation}\label{eq:err_C}
	err_C=\frac{|E|}{|L|}.
\end{equation}
In order to maintain the guarantee provided by the original controller $C$, it is very important to \emph{correct} the output of the trained representation, so that it outputs a valid control input at \emph{every} state. In case the misclassification rate is small, we can store $\NN_C$ together with $\tilde{C}$, where % the states at which the $\NN_C$ has produced wrong control inputs. We define 
\begin{equation}\label{eq:cont2nd_to_be_stored}
 \tilde C =\set{(\xbb,\ubb)\mid \xbb\in E,\;\ubb\in C(\xbb)}.
\end{equation}
%With an abuse of notation, let $\tilde C(\cdot)$ assign a state $\xbb\in E$ to exactly \emph{one} control input $\ubb\in\Ub$ such that $\ubb\in C(\xbb)$. So, 
The final deployable controller $\hat C$ consists of both $\NN_C$ and $\tilde C$, and is defined as % Formally, the representation controller $\hat C\colon L\mapsto \Ub$ is defined in the following way:
\begin{equation}
	\label{eq:C_hat_def}
	\hat C(\xbb)\coloneq \left\{
	\begin{array}{lr}
		\Iu^{-1}(argmax(\NN_C(\xbb)))\quad \text{if}\; \xbb\notin E\\
		\tilde{C}(\xbb)\qquad\qquad\qquad\qquad\; \text{if}\;\xbb\in E.
	\end{array}\right.
\end{equation}
\begin{comment}
\begin{lemma}\label{lemma:conreoller
	}
	Let $\hat C$, as defined in Eq.~\eqref{eq:C_hat_def}, be the representation for the original correct-by-construction controller $C$ computed by solving the Prob.~\ref{prob:discrete}, i.e., $C\parallel \bar\Sigma$ realizes a given specification $\Phi$. Then, we have that $\hat C\parallel \bar \Sigma$ satisfies $\Phi$ as well.
\end{lemma}
\end{comment}
\begin{lemma}\label{lemma:conreoller
	}
	Let $\hat C$ be as defined in Eq.~\eqref{eq:C_hat_def}. The winning domain of both $\hat C\parallel \bar\Sigma$ and $C\parallel \bar\Sigma$ for satisfying a specification $\Phi$ is the same.
\end{lemma}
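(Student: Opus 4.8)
The plan is to prove the two winning domains coincide by establishing mutual inclusion, and the crux of the whole argument is that, by construction, $\hat C$ always returns a control input that is already admissible under the original controller $C$. So the first thing I would record is the \emph{soundness of the correction}: for every $\xbb\in L$ we have $\hat C(\xbb)\in C(\xbb)$. This is immediate from Eq.~\eqref{eq:C_hat_def} together with the definitions of $E$ and $\tilde C$: if $\xbb\notin E$ then $\Iu^{-1}(argmax(\NN_C(\xbb)))\in C(\xbb)$ precisely because $\xbb$ is not in the exception set, and if $\xbb\in E$ then $\hat C(\xbb)=\tilde C(\xbb)\in C(\xbb)$ since $\tilde C$ in Eq.~\eqref{eq:cont2nd_to_be_stored} stores only pairs $(\xbb,\ubb)$ with $\ubb\in C(\xbb)$. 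Viewing $\hat C$ as the relation $\{(\xbb,\hat C(\xbb))\mid \xbb\in L\}$, this says $\hat C\subseteq C$.

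Writing $\mathrm{Win}(C)$ and $\mathrm{Win}(\hat C)$ for the sets of states from which \emph{every} trajectory of the respective closed loop satisfies $\Phi$, I would then obtain the inclusion $\mathrm{Win}(C)\subseteq\mathrm{Win}(\hat C)$ with no progress or rank argument at all. Indeed, since $\hat C\subseteq C$, any trajectory $\xbb_0,\xbb_1,\ldots$ of $\hat C\parallel\bar\Sigma$ uses inputs $\hat C(\xbb_i)\in C(\xbb_i)$ at every step and hence is also a trajectory of $C\parallel\bar\Sigma$. Therefore, if all trajectories of $C\parallel\bar\Sigma$ from $\xbb^\ast$ satisfy the reach-avoid formula $\Phi=\lnot\avoid\,\mathcal U\,\goal$, then so do all trajectories of $\hat C\parallel\bar\Sigma$ from $\xbb^\ast$ (they form a subset), giving $\xbb^\ast\in\mathrm{Win}(\hat C)$. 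The residual disturbance nondeterminism of $\bar\Sigma$ does not obstruct this, as each resulting $\hat C$-trajectory is still a legal $C$-trajectory.

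For the reverse inclusion $\mathrm{Win}(\hat C)\subseteq\mathrm{Win}(C)$ I would appeal to the \emph{maximality} of the computed winning set. By construction $\mathrm{Win}(C)=L$ is the maximal set of states from which some feedback controller enforces $\Phi$ on $\bar\Sigma$, being exactly the fixed point returned by the synthesis algorithm of Sec.~\ref{subsec:on-the-fly_synthesis}. If $\xbb^\ast\in\mathrm{Win}(\hat C)$, then $\hat C$ is itself a witnessing controller enforcing $\Phi$ from $\xbb^\ast$, so $\xbb^\ast$ must lie in this maximal winning region, i.e.\ $\xbb^\ast\in L=\mathrm{Win}(C)$. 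Combining the two inclusions yields $\mathrm{Win}(\hat C)=\mathrm{Win}(C)$.

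The step I expect to need the most care is purely definitional rather than computational: one must fix a precise notion of the ``winning domain of a fixed controller'' and confirm that for the synthesized $C$ it equals the maximal winning region $L$, so that the maximality argument closes the second inclusion. A minor related point is the behavior of $\hat C$ on states outside $L$, where it may be left undefined; there $\hat C$ vacuously fails to enforce $\Phi$, so such states are harmlessly absent from $\mathrm{Win}(\hat C)$ and threaten neither inclusion. Once these conventions are pinned down, both directions are essentially one-line arguments resting on $\hat C\subseteq C$ and the maximality of $L$.
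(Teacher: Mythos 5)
The paper never actually writes out a proof of this lemma: all it offers is the remark immediately following it, whose content is exactly your first two paragraphs. By Eq.~\eqref{eq:C_hat_def} together with the definitions of $E$ and $\tilde C$ in Eq.~\eqref{eq:cont2nd_to_be_stored}, the corrected controller only ever emits inputs from $C(\xbb)$, so every closed-loop trajectory of $\hat C\parallel\bar\Sigma$ is also a trajectory of $C\parallel\bar\Sigma$ and the guarantee transfers. Your soundness-of-correction step and the inclusion $\mathrm{Win}(C)\subseteq\mathrm{Win}(\hat C)$ are precisely the paper's intended argument, made explicit and correct.

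The one genuine flaw is the maximality premise in your reverse inclusion. You assert that $L$, the fixed point returned by Alg.~\ref{alg:synthesis}, is the maximal set of states from which \emph{some} feedback controller enforces $\Phi$ on $\bar\Sigma$. That is not justified by the paper and is false in general: Alg.~\ref{alg:synthesis} runs on the corrected representations $R_F$ and $R_B$ of Eqs.~\eqref{eq:R_F} and \eqref{eq:R_B}, which strictly over-approximate the reachable sets of $T_F$ whenever the mismatch rates $d_F,d_B$ are nonzero (Tab.~\ref{tab:discrete_results} shows they typically are). The membership test in Lemma~\ref{lem:legitimate_selection} demands $\bar K(R_F^c(\xbb,\ubb)-R_F^r(\xbb,\ubb),R_F^c(\xbb,\ubb)+R_F^r(\xbb,\ubb))\subseteq P_i$, which is stronger than $T_F(\xbb,\ubb)\subseteq P_i$; hence $L$ is only a \emph{sound under-approximation} of the true maximal winning region of $\bar\Sigma$, and a state $\xbb^\ast\in\mathrm{Win}(\hat C)$ cannot be placed inside $L$ by a ``witnessing controller'' appeal, since truly winning states may well lie outside $L$.

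The repair is the observation you relegate to a ``minor related point'': $E$, $\tilde C$, and the validity guarantee for $\Iu^{-1}(argmax(\NN_C(\xbb)))$ are all constructed only over $L$, so $\hat C$ is specified (and validated) exactly on the domain of $C$; under the natural convention that states without a defined valid input are losing for a fixed closed loop, one gets $\mathrm{Win}(\hat C)\subseteq L\subseteq\mathrm{Win}(C)$, where the second inclusion is just the correctness of the synthesized $C$ --- no maximality of $L$ is needed anywhere. With that substitution, both directions close and your proof is complete, indeed more rigorous than what the paper provides.
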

\begin{remark}
Our deployment method preserves soundness.
The input to our deployment approach is a formally guaranteed controller computed by any abstraction-based method. We train a neural representation that maps the states to a control input. This control input is valid for majority of the states. For the states that the control input is not valid, we keep the set of valid control inputs from the original controller and store them as a small look-up table. Therefore, the final corrected neural controller in Eq.~\eqref{eq:C_hat_def} is sound with respect to the original controller.	
\end{remark}
% !TEX root = main.tex

\section{Experimental Evaluation}\label{sec:experiments}
\begin{table*}
	\tiny
	\centering
	%\caption{Catalog of case-studies considered in this paper.
	\caption{Catalog of models used to generate the finite abstractions in Sec.~\ref{sec:experiments}.
		\textmd{}
		\label{tab:case_studies}
	}
	
	\renewcommand{\arraystretch}{1.2}
	\setlength{\tabcolsep}{0.2em} % for the horizontal padding
	%\resizebox{1\columnwidth}{!}{
	\begin{tabular}{l|c|c|c|c|c|c|c|c|c}
		\toprule
		\multirow{2}{*}{Case study}&\multirow{2}{*}{Dynamics of the model}& \multicolumn{4}{c}{Configuration (1)}&\multicolumn{4}{|c}{Configuration (2)}\\\cmidrule(l){3-10}
		&&$X$&$U$&$\eta_x$&$\eta_u$&$X$&$U$&$\eta_x$&$\eta_u$\\
		\midrule
		
		\multirow{1}{*}{2D car}& \multirow{2}{*}{$\begin{bmatrix}\dot\xb(1)\\\dot\xb(2)\end{bmatrix}\in\begin{bmatrix}\ub(1)\\\ub(2)\end{bmatrix}+W$}&$[0,5]^2$&$[-1,1]^2$&\multirow{2}{*}{$\begin{bmatrix}0.05\\0.05\end{bmatrix}$}&\multirow{2}{*}{$\begin{bmatrix}0.23\\0.23\end{bmatrix}$}&$[0,10]^2$&$[-2,2]^2$&\multirow{2}{*}{$\begin{bmatrix}0.025\\0.025\end{bmatrix}$}&\multirow{2}{*}{$\begin{bmatrix}0.3\\0.3\end{bmatrix}$}\\{$(\xb(1),\xb(2))$- position}&&&&&&&&&\\{$(\ub(1),\ub(2))$- speed}&$\tau=0.4$, $W=[-0.025,0.025]^2$&&&&&&&&\\
		
		\midrule
		
		\multirow{1}{*}{3D car}& \multirow{3}{*}{$\begin{bmatrix}\dot\xb(1)\\\dot\xb(2)\\\dot \xb(3)\end{bmatrix}\in\begin{bmatrix}\ub(1)\cos(\xb(3))\\\ub(1)\sin(\xb(3))\\\ub(2)\end{bmatrix}+W$}&$[0,5]^2\times$&$[-1,1]^2$&\multirow{3}{*}{$\begin{bmatrix}0.2\\0.2\\0.2\end{bmatrix}$}&\multirow{2}{*}{$\begin{bmatrix}0.3\\0.3\end{bmatrix}$}&$[0,10]^2\times$&$[-1.5,1.5]\times$&\multirow{3}{*}{$\begin{bmatrix}0.1\\0.1\\0.1\end{bmatrix}$}&\multirow{2}{*}{$\begin{bmatrix}0.23\\0.23\end{bmatrix}$}\\{$(\xb(1),\xb(2))$- position}&&$[-1.6, 1.6]$&&&&$[-\pi, \pi]$&$[-1, 1]$&&\\
		$\xb(3)$- angle&&&&&&&&&\\
		$\ub(1)$- speed&&&&&&&&&\\
		$\ub(2)$- turn rate&$\tau=0.3$, $W=\set{\textbf 0}$&&&&&&&&\\
		
		\midrule
		
		\multirow{1}{*}{4D car}& \multirow{4}{*}{$\begin{bmatrix}\dot\xb(1)\\\dot\xb(2)\\\dot \xb(3)\\\dot \xb(4)\end{bmatrix}\in\begin{bmatrix}\xb(4)\cos(\xb(3))\\\xb(4)\sin(\xb(3))\\\ub(1)\\\ub(2)\end{bmatrix}+W$}&{$[0,5]^2\times$}&$[-1,1]^2$&\multirow{4}{*}{$\begin{bmatrix}0.2\\0.2\\0.2\\0.2\end{bmatrix}$}&\multirow{2}{*}{$\begin{bmatrix}0.3\\0.3\end{bmatrix}$}&{$[0,10]^2\times$}&$[-2,2]^2$&\multirow{4}{*}{$\begin{bmatrix}0.2\\0.2\\0.2\\0.2\end{bmatrix}$}&\multirow{2}{*}{$\begin{bmatrix}0.2\\0.2\end{bmatrix}$}\\{$(\xb(1),\xb(2))$- position}&&$[-1.6,1.6]\times$&&&&$[-\pi,\pi]\times$&&&\\
		$\xb(3)$- angle&&$[-1,1]$&&&&$[-1,1]$&&&\\
		$\xb(4)$- speed&&&&&&&&&\\
		$\ub(1)$- turn rate&&&&&&&&&\\
		$\ub(2)$- acceleration control&$\tau=0.5$, $W=\set{\textbf 0}$&&&&&&&&\\
		
		\midrule
		
		\multirow{1}{*}{5D car}& \multirow{4}{*}{$\begin{bmatrix}\dot\xb(1)\\\dot\xb(2)\\\dot \xb(3)\\\dot \xb(4)\\\dot\xb(5)\end{bmatrix}\in\begin{bmatrix}\xb(4)\cos(\xb(3))\\\xb(4)\sin(\xb(3))\\\xb(5)\\\ub(1)\\\ub(2)\end{bmatrix}+W$}&{$[0,5]^2\times$}&$[-1,1]^2$&\multirow{4}{*}{$\begin{bmatrix}0.2\\0.2\\0.2\\0.2\\0.2\end{bmatrix}$}&\multirow{2}{*}{$\begin{bmatrix}0.3\\0.3\end{bmatrix}$}&{$[0,10]^2\times$}&$[-2,2]^2$&\multirow{4}{*}{$\begin{bmatrix}0.1\\0.1\\0.1\\0.1\\0.1\end{bmatrix}$}&\multirow{2}{*}{$\begin{bmatrix}0.2\\0.2\end{bmatrix}$}\\{$(\xb(1),\xb(2))$- position}&&$[-1.6,1.6]\times$&&&&$[-\pi,\pi]\times$&&&\\
		$\xb(3)$- angle&&$[-1,1]\times $&&&&$[-1,1]\times$&&&\\
		$\xb(4)$- speed&&$[-1,1]$&&&&$[-1,1]$&&&\\
		$\xb(5)$- turn rate&&&&&&&&&\\
		$\ub(1)$- acceleration &&&&&&&&&\\
		$\ub(2)$- angular acceleration &$\tau=0.5$, $W=\set{\textbf 0}$&&&&&&&&\\
		
		\midrule
		\multirow{1}{*}{Inverted pendulum}& \multirow{2}{*}{$\begin{bmatrix}\dot\xb(1)\\\dot\xb(2)\end{bmatrix}\in\begin{bmatrix}\xb(2)\\\frac{g}{L}\sin(\xb(1))+ \frac{1}{mL^2}\ub(1)\end{bmatrix}+W$}&$[-\pi,\pi]\times$&$[-1,1]$&\multirow{2}{*}{$\begin{bmatrix}0.2\\0.2\end{bmatrix}$}&\multirow{1}{*}{$\begin{bmatrix}0.3\end{bmatrix}$}&$[-\pi,\pi]\times$&$[-1,1]$&\multirow{2}{*}{$\begin{bmatrix}0.1\\0.1\end{bmatrix}$}&\multirow{1}{*}{$\begin{bmatrix}0.3\end{bmatrix}$}\\
		$\xb(1)$- angle&&$[-2,2]$&&&&$[-2,2]$&&&\\
		$\xb(2)$- angular velocity&&&&&&&&&\\
		$\ub(1)$- torque&$L=g$, $m=\frac{8}{g^2}$&&&&&&&&\\
		&$\tau=0.05$, $W=\set{\textbf 0}$&&&&&&&&\\
		
		\midrule
		\multirow{1}{*}{TORA}& \multirow{4}{*}{$\begin{bmatrix}\dot\xb(1)\\\dot\xb(2)\\\dot\xb(3)\\\dot\xb(4)\end{bmatrix}\in\begin{bmatrix}\xb(2)\\-\xb(1)+0.1\sin(\xb(3))\\\xb(4)\\\ub(1)\end{bmatrix}+W$}&$[-2,2]^4$&$[-1,1]$&\multirow{4}{*}{$\begin{bmatrix}0.2\\0.2\\0.2\\0.2\end{bmatrix}$}&\multirow{1}{*}{$\begin{bmatrix}0.3\end{bmatrix}$}&$[-2,2]^4$&$[-1,1]$&\multirow{4}{*}{$\begin{bmatrix}0.1\\0.1\\0.1\\0.1\end{bmatrix}$}&\multirow{1}{*}{$\begin{bmatrix}0.3\end{bmatrix}$}\\
		$\xb(1)$&&&&&&&&&\\
		$\xb(2)$&&&&&&&&&\\
		$\xb(3)$- angle&&&&&&&&&\\
		$\xb(4)$- angular velocity&&&&&&&&&\\
		$\ub(1)$- torque&$\tau=0.5$, $W=\set{\textbf 0}$&&&&&&&&\\
		
		\bottomrule
	\end{tabular}%}
\end{table*}
\begin{table*}
	\tiny
	\centering
	\caption{The results of regression-based controller synthesis  for finite abstractions.\textmd{ $\Xb\times\Ub$ indicates the number of discrete state-input pairs, $\eFb$, $\eBb$ denote the soundness errors, respectively, for the forward and backward representations, computed using Eq.~\eqref{eq:FW_soundness_err}, $d_F$ and $d_B$ give the graph mismatch rates for the forward and backward dynamics using using Eq.~\eqref{eq:mismatch_rate_FW_BW}, $\M_T$ gives the memory needed to store the original transition system in kB, $\M_F+\M_B$ denotes the memory taken by the representing neural networks for the forward and backward dynamics in kB, $\mathcal T_c$ denotes the total execution time for computing the compressed representations in minutes. and $\mathcal T_s$ denotes the total execution time for synthesizing the controller in minutes.}}
	\label{tab:discrete_results}
	\renewcommand{\arraystretch}{1.2}
	\setlength{\tabcolsep}{0.2em}
	%\setlength{\tabcolsep}{0.7em}% for the horizontal padding
	%\resizebox{1\columnwidth}{!}{
		\begin{tabular}{l|ccccccccc}
			\toprule
			Case study&$|\Xb|\times|\Ub|$&$\eFb$&$\eBb$&$d_F$&$d_B$&$\M_T$ (kB)&$\M_F+\M_B$  (kB)&$\mathcal T_c$ (min)&$\mathcal T_s$ (min)\\
			\midrule
			
			\multirow{1}{*}{\rotatebox{0}{2D car}} &$810000$&$\begin{bmatrix}1.02\times 10 ^{-2}\\ 1.58\times 10 ^{-2}\end{bmatrix}$& $\begin{bmatrix}2.81\times 10 ^{-2}\\ 1.17\times 10 ^{-2}\end{bmatrix}$&$6.81\times 10^{-1}$&$9.64\times 10^{-1}$&$7.76\times 10^4$&$488$&$68.58$&$8.55$\\
			
			\midrule
			
			\multirow{1}{*}{\rotatebox{0}{3D car}} &$451584$ & $\begin{bmatrix}2.05\times10^{-2}\\ 2.19\times10^{-2}\\ 2.26\times10^{-2}\end{bmatrix}$  & $\begin{bmatrix}2.48\times10^{-2}\\ 1.76\times10^{-2}\\ 2.32\times10^{-2}\end{bmatrix}$&$7.11\times 10^{-1}$&$7.85\times 10^{-1}$&$1.35\times 10 ^5$&$488$&$65.46$& $14.50$\\
			
			\midrule
			\multirow{1}{*}{\rotatebox{0}{4D car}} &$4967424$& $\begin{bmatrix}1.71\times10^{-2}\\ 2.40\times10^{-2}\\ 1.62\times10^{-2}\\1.96\times10^{-2}\end{bmatrix}$  & $\begin{bmatrix}2.05\times10^{-2}\\ 1.54\times10^{-2}\\ 1.35\times10^{-2}\\1.25\times10^{-2}\end{bmatrix}$&$4.24\times 10^{-1}$&$2.87\times 10^{-1}$&$5.58\times 10^6$&$488$&$446.23$&$20.55$\\
			\midrule
			\multirow{1}{*}{\rotatebox{0}{5D car}} &$30735936$& $\begin{bmatrix}1.41\times10^{-2}\\ 1.18\times10^{-2}\\ 1.97\times10^{-2}\\2.22\times10^{-2}\\1.93\times10^{-2}\end{bmatrix}$  & $\begin{bmatrix}2.11\times10^{-2}\\ 1.79\times10^{-2}\\ 1.13\times10^{-2}\\1.65\times10^{-2}\\2.45\times10^{-2}\end{bmatrix}$&$5.34\times 10^{-1}$&$4.25\times 10^{-1}$&$ 3.64\times 10^8 (OOM)$&$488$&$3025.14$&$312.15$\\
			
			\midrule
			
			\multirow{1}{*}{\rotatebox{0}{Inverted pendulum}} &$17360$&$\begin{bmatrix} 2.53\times 10 ^{-2}\\ 3.44\times 10 ^{-2}\end{bmatrix}$& $\begin{bmatrix}2.31\times 10 ^{-2}\\ 2.97\times 10 ^{-2}\end{bmatrix}$&$6.50\times 10^{-1}$&$5.61\times 10^{-1}$&$2.27\times 10^4$&$488$&$68.58$&$4.18$\\
			
			\midrule
			
			\multirow{1}{*}{\rotatebox{0}{TORA}} &$1433531$&$\begin{bmatrix} 2.53\times 10 ^{-2}\\ 2.67\times 10 ^{-2}\\2.39\times 10 ^{-2}\\2.24\times 10 ^{-2}\end{bmatrix}$& $\begin{bmatrix}2.21\times 10 ^{-2}\\ 2.57\times 10 ^{-2}\\ 1.88\times 10 ^{-2}\\ 3.03\times 10 ^{-2}\end{bmatrix}$&$4.34\times 10^{-1}$&$4.15\times 10^{-1}$&$1.57\times 10^7$&$488$&$241.48$&$166.16$\\
			\bottomrule
	\end{tabular}
\end{table*}

\begin{table*}
	\tiny
	\centering
	\caption{The results of classifier-based controller synthesis for finite abstractions.\textmd{ $\Xb\times\Ub$ indicates the number of discrete state-input pairs, $err_F$, $err_B$ denote the soundness errors, respectively, for the forward and backward representations, computed using Eq.~\eqref{eq:cls_soundness_errors}, $d_F$ and $d_B$ give the graph mismatch rates for the forward and backward dynamics, $\M_T$ gives the memory needed to store the original transition system in kB, $\M_F+\M_B$ denotes the memory taken by the representing neural networks for the forward and backward dynamics in kB, $\mathcal T_c$ denotes the total execution time for computing the compressed representations in minutes. and $\mathcal T_s$ denotes the total execution time for synthesizing the controller in minutes.}}
	\label{tab:discrete_results_cls}
	\renewcommand{\arraystretch}{1.2}
	\setlength{\tabcolsep}{0.2em}
	%\renewcommand{\arraystretch}{1}
	%\setlength{\tabcolsep}{0.7em}% for the horizontal padding
	%\resizebox{1\columnwidth}{!}{
		\begin{tabular}{l|ccccccccc}
			\toprule
			Case study&$|\Xb|\times|\Ub|$&$err_F$&$err_B$&$d_F$&$d_B$&$\M_T$ (kB)&$\M_F+\M_B$ (kB)&$\mathcal T_c$ (min)&$\mathcal T_s$ (min)\\
			\midrule
			
			\multirow{1}{*}{\rotatebox{0}{2D car}} &$810000$&$2.75\times 10^{-2}$& $3.27\times 10^{-2}$&$2.65\times 10^{-2}$&$2.93\times 10^{-2}$&$7.76\times 10^4$&$1.33\times 10^4$&$68.58$&$10.71$\\
			
			\midrule
			
			\multirow{1}{*}{\rotatebox{0}{3D car}} &$451584$ & $2.71\times 10^{-4}$  & $2.21\times 10^{-6}$&$3.71\times 10^{-5}$&$9.47\times 10^{-7}$&$1.35\times 10 ^5$&$1.91\times 10 ^4$&$50.74$& $12.11$\\
			
			\midrule
			\multirow{1}{*}{\rotatebox{0}{4D car}} &$4967424$& $6.24\times 10^{-4}$  & $0$&$2.84\times 10^{-4}$&$0$&$5.58\times 10^6$&$2.37\times 10^4$&$565.13$&$24.58$\\
			
			\midrule
			\multirow{1}{*}{\rotatebox{0}{5D car}} &$30735936$& $3.41\times 10^{-5}$  & $5.33\times 10^{-8}$&$3.21\times 10^{-5}$&$2.19\times 10^{-8}$&$3.64\times 10^8 (OOM)$&$3.27\times 10^4$&$3421.21$&$215.88$\\
			
			\midrule
			\multirow{1}{*}{\rotatebox{0}{Inverted pendulum}} &$17360$& $6.03\times 10^{-2}$  & $5.85\times 10^{-2}$&$0$&$0$&$2.27\times 10^4$&$2.08\times 10^4$&$8.21$&$8.33$\\
			
			\midrule
			\multirow{1}{*}{\rotatebox{0}{TORA}} &$1433531$& $1.27\times 10^{-1}$ & $1.26\times 10^{-1}$&$1.55\times 10^{-1}$&$1.48\times 10^{-1}$&$1.57\times 10^7$&$2.38\times10^4$&$234.87$&$159.75$\\
			\bottomrule
	\end{tabular}
\end{table*}
%\smallskip
%\noindent\textbf{Case-studies.}\
We evaluate the performance of our proposed algorithms %for controller synthesis and deployment 
on %multi-dimensional vehicle 
several control systems. Dynamics of our control systems are listed in Tab.~\ref{tab:case_studies}. % in App.~\ref{app:catalog}. 
We used configurations (1) and (2) in Tab.~\ref{tab:case_studies}, respectively, for evaluating our methods for synthesis and deployment. 
%The transition system $T_F$ for all of the case studies is computed by simulating trajectories initiated randomly within every partition in $X$, (i.e., $\ell_\infty$ cells centered at some $\xbb\in\Xb$ with the radius $\etab_x$), and for every control input $\ubb\in\Ub$.
We construct the transition system in all the case studies using the sampling approach in \cite{Milad:2022}. This approach generates $T_F$ using sampled trajectories while providing confidence on the correctness of $T_F$.
%The computation of nominal trajectory was performed on a laptop with core i5 CPU at 2.30GHz, with 16GB of RAM.
Our experiments were performed on a cluster with Intel Xeon E7-8857 v2 CPUs (32 cores in total) at 3GHz, with 100GB of RAM. %, out of which we only used 32GB. 
For training neural networks, we did not use a distributed implementation as we found that distributing the process across GPUs actually decelerates the process. However, for the rest of our compression and synthesis algorithms, we used a distributed implementation.

\smallskip
\noindent\textbf{Synthesis.}\
 %For the controller synthesis experiments,
We considered the $\ell_\infty$ ball centered at $(4,4)$ with the radius $0.8$ over the Euclidean plane as the target set for the multi-dimensional car examples, $[-0.5,0.5]\times[-1,1]$ for the inverted pendulum example, and $[-1,1]^4$ for the TORA example. %For the experiments which use open-loop planning, we set $\xb_\init=(1,1)$ to be the initial point and $\xb_f=(4,4)$. 
To evaluate our %regression-based
corrected neural method described in Subsec.~\ref{subsec:TS_compression_regression}, %for both $\NN_F$ and $\NN_B$, we set the structure to be $(n+m,20,40,30,2n)$, the activation functions to be hyperbolic tangent and the learning rate to be $\lambda=0.001$ for training the neural networks for all of the examples.
we set the list of neuron numbers in different layers as $(n+m,20,40,30,2n)$, select the activation functions to be hyperbolic tangent, and set the learning rate to be $\lambda=0.001$. As discussed in %Rem.~\ref{rem:cls_method}, 
Subsec.~\ref{sec:TS_compression_classification}, the corrected neural representations for finite abstractions can also be constructed by solving a classification problem. 
To evaluate this method, we set the list of neuron numbers in different layers for both $\NN_F$ and $\NN_B$ as $(n+m,40,160,160,160,160,160,160,160,160,500,800,2\sum_{i=1}^n|\Xb(i)|)$, select the activation functions to be ReLU, and set the learning rate to be $\lambda=0.0001$. 
%for both $\NN_F$ and $\NN_B$, we set the structure to be $(n+m,40,160,160,160,160,160,160,160,160,500,800,2\sum_{i=1}^n|\Xb(i)|)$, the activation functions to be rectified linear unit (ReLU) and the learning rate to be $\lambda=0.0001$ for training the neural networks across all of the examples.
 We used stochastic gradient descent method with the corresponding learning rate for training the neural networks \cite{Ruder:2016}. Tabs.~\ref{tab:discrete_results} and \ref{tab:discrete_results_cls} illustrate the synthesis results related to our experiments for finite abstractions, using the regression-based and classification-based methods, respectively. %Looking at the values of $\eFb$ and $\eBb$ in Tab.~\ref{tab:discrete_results}, and $err_F$ and $err_B$ in Tab.~\ref{tab:discrete_results_cls}, it is clear that neural networks do a good job for approximating the forward-in-time and backward-in-time dynamics, although we used the same structure for all of the examples. 
 Although we used the same neural network structure for all the examples, soundness errors take small values that are bounded by $3.44\times10^{-2}$ as the maximum of $\eFb$ and $\eBb$ in the regression-based method, and by 
 $1.27\times 10^{-1}$ as the maximum of $err_F$ and $err_B$ in the classification-based method. Moreover, memory requirement of our proposed regression-based and classification-based methods at higher dimensions remains almost constant while the size of the transition system increases exponentially (see the illustration shown in Fig.~\ref{fig:results} (Left) for the multi-dimensional car case studies). %take values, which proves excellent approximation power of neural networks in both of the formulations.
Further, we notice that the regression-based method results in higher mismatch rates $d_F$ and $d_B$ compared to the classification-based method: on average, $5.87\times 10^{-1}$ versus $3.03\times 10^{-2}$ for $d_F$, and $6.15\times 10^{-1}$ versus $2.96\times 10^{-2}$ for $d_B$ (see the illustration shown in Fig.~\ref{fig:results} (Right) for the multi-dimensional car case studies).
Therefore, using the classification-based method, while being sound, produces a smaller graph, which is less restrictive for the synthesis purpose. 
%This implies that running the synthesis algorithm over the original transition system and the graph corresponding to the learned neural networks should give winning domains of close sizes. 
Most importantly, memory requirement using both our approaches is way less  than the memory needed to store the original (forward) transition system ($\M_F+\M_B<<\M_T$).
Regression-based method reduces the memory requirements by a factor of $1.31\times 10^5$ and up to $7.54\times 10^5$. However, the classification-based method reduces the memory requirements by 
a factor of $2.01\times 10^3$ and up to $1.12\times 10^4$. 
This shows that the regression-based method requires less memory compared to the classification-based method.
\begin{figure}
	\centering
	\begin{tabular}{ll}
		\includegraphics[scale=.3]{./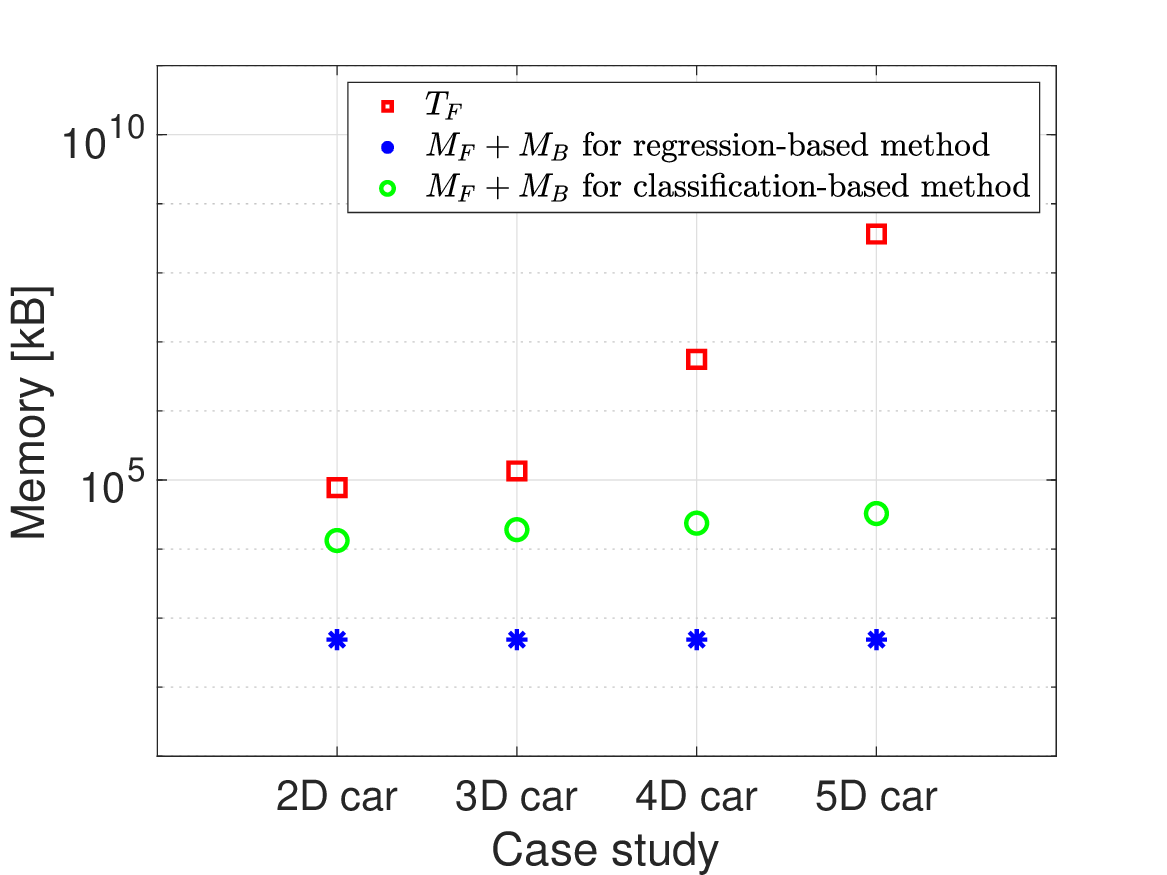}
		&
		\includegraphics[scale=.3]{./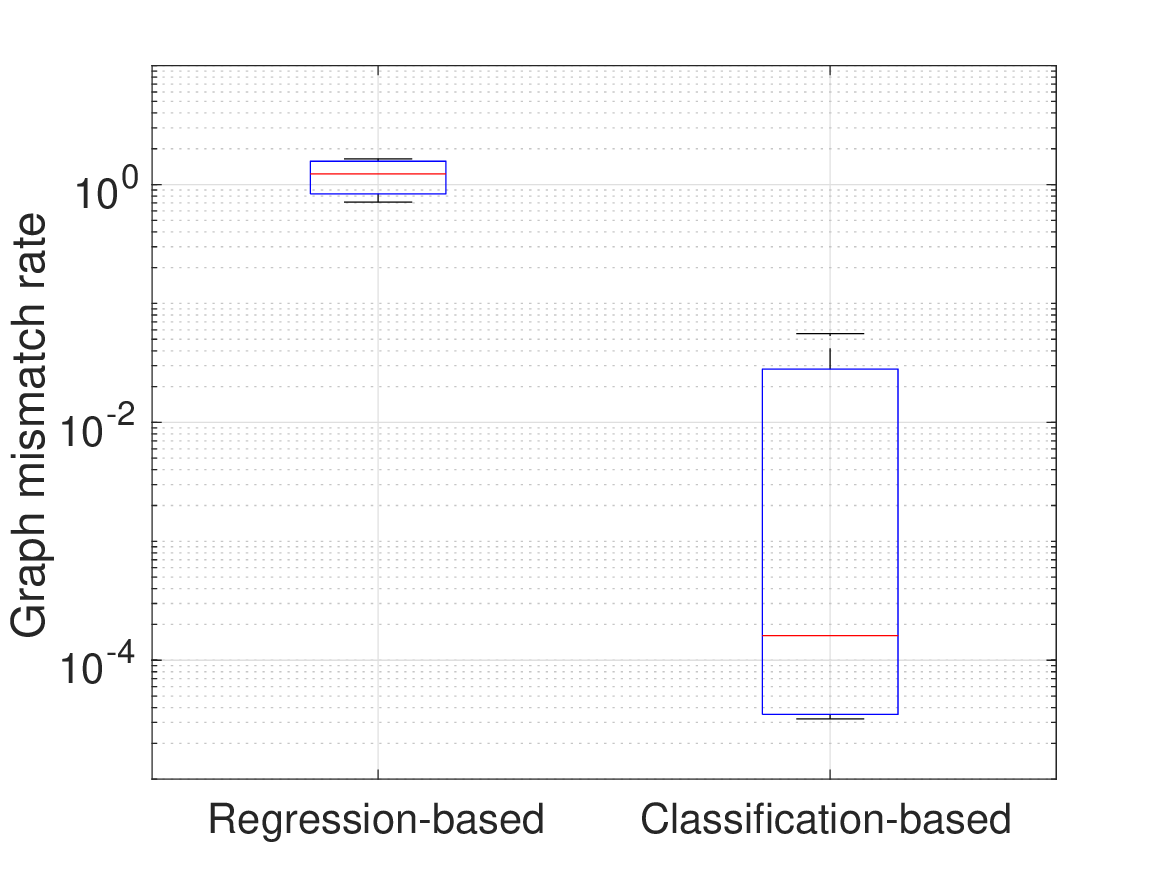}
		%&
		%\includegraphics[scale=.22]{./}
	\end{tabular}
	\caption{Left: Memory requirement of different methods for storing transition systems of multi-dimensional cars (cf. Tab.~\ref{tab:case_studies}) in logarithmic scale. Right: Distribution of total graph mismatch rate ($d_F+d_B$) for our proposed methods in logarithmic scale.% Right: Memory needed for storing controllers over multi-dimensional case studies.
	}
	\label{fig:results}
	\vspace{-.3cm}
\end{figure}

\smallskip
\noindent\textbf{Deployment.}\
Tab.~\ref{tab:controller_results} lists our experimental results for compressing the symbolic controllers. For $\NN_C$, %we set the structure to be $(n,20,80,80,80,80,80,160,|\Ub|)$ and the activation functions to be ReLU. We used stochastic gradient method with learning rate $\lambda=0.0001$ for training the neural networks for all of the examples.
we set the list of neuron numbers in different layers for both $\NN_F$ and $\NN_B$ as $(n,20,80,80,80,80,80,160,|\Ub|)$, select the activation functions to be rectified linear unit (ReLU), and set the learning rate to be $\lambda=0.0001$. It can be noticed that $err_C$ is very small for all the examples. Therefore, we only need to store a very small portion of $C$ in addition to $\NN_C$.
As it can be observed in Tab.~\ref{tab:controller_results}, our method has been successful in computing representations which are very accurate and compact-in-size ($\M_{\hat C}<<\M_C$).

%\begin{comment}
\begin{table*}
	\tiny
	\centering
	\caption{The results of controller compression.\textmd{
			$|C|$ gives the number of state-input pairs in the original controller, $err_C$ denotes the portion of the states at which the representing neural network produces non-valid control inputs computed using Eq.~\eqref{eq:err_C}, $\M_C$ gives the memory needed to store the original controller in kB, $\M_{\hat C}$ denotes the memory taken by the representing neural network in kB, and $\mathcal T$ denotes the total execution time for our implementation in minutes.}}
	\label{tab:controller_results}
	\renewcommand{\arraystretch}{1.2}
	\setlength{\tabcolsep}{0.2em}
		\begin{tabular}{l|ccccc}
			\toprule
			Case study&$|C|$&$err_C$&$\M_C$ (kB)&$\M_{\hat C} (kB)$&$\mathcal T$ (min)\\
			\midrule
			
			\multirow{1}{*}{\rotatebox{0}{2D car}} &$2.15
			\times 10 ^ 6$&$1.85\times 10^{-5}$& $2.75\times 10 ^5$%$3.22\times 10 ^3$
			&$1.21\times 10 ^3$&$6.31$\\
			
			\midrule
			
			\multirow{1}{*}{\rotatebox{0}{3D car}} &$2.87\times 10 ^6$&$2.16\times 10^{-3}$& $4.65\times 10 ^5$
			%$1.41\times 10 ^4$
			&$1.05\times 10 ^3$&$19.14$\\
			
			\midrule
			\multirow{1}{*}{\rotatebox{0}{4D car}}  &$9.35 \times 10^7$&$3.63\times10^{-2}$& $2.24\times 10^6$
			%$5.62\times 10^4$
			&$1.35\times 10^3$&$39.48$\\
			
			\midrule
			\multirow{1}{*}{\rotatebox{0}{5D car}}  &$1.69 \times 10^9$&$4.51\times10^{-3}$& $4.71\times 10^7$
			%$5.62\times 10^4$
			&$1.48\times 10^3$&$201.86$\\
			
			\midrule
			\multirow{1}{*}{\rotatebox{0}{Inverted pendulum}}  &$8.16 \times 10^5$&$1.08\times10^{-3}$& $7.83\times 10^4$
			%$5.62\times 10^4$
			&$8.92\times 10^2$&$7.51$\\

			\midrule
			\multirow{1}{*}{\rotatebox{0}{TORA}}  &$4.78 \times 10^7$&$3.78\times10^{-4}$& $7.65\times 10^6$
			%$5.62\times 10^4$
			&$8.92\times 10^2$&$113.97$\\
			\bottomrule
	\end{tabular}
\end{table*}

%\end{comment}
%For continuous control systems, Table~\ref{tab:cont_results} compares effectiveness of our proposed methods with ABCD implemented in the tool SCOTS \cite{Rungger2016scots}. %In Table~\ref{tab:cont_results},FSS refers to our proposed method wherein the full state-space is considered, and RSS refers to our accelerated scheme for continuous control systems. 
%Our proposed on-the-fly synthesis method which considers the full state space is denoted as FSS, and our accelerated solution, which uses open-loop planning to reduce the size of state space, is referred to as RSS. To run RSS, for all examples we set the tube radius to be $0.4$ in every dimension.  As illustrated in Table~\ref{tab:cont_results}, both our methods, FSS and RSS, require significant less memory compared to ABCD. Moreover, runtime of our accelerated synthesis solution, RSS, is much lower compared to FSS and is comparable to that of ABCD. Note that the winning set of RSS is a subset of the complete winning set computed by FSS and ABCD.

%Tab.~\ref{tab:controller_results} illustrates our experimental results for compressing the controller. 

\noindent\textbf{Parametrization.} Our approach requires selecting the hyperparameters of the training process and choosing the structure of the neural networks. We have performed several experiments to select the hyperparameters of the training (e.g., the learning rate, epoch number, and batch size). Regarding the structure of the neural networks, we have explored different choices such as the type of the activation functions (hyperbolic tangent, ReLU, etc.), number of neurons per layer, and the depth. Increasing the complexity of the neural network, by increasing the number of neurons per layer or depth, leads to a better performance.
Note that the neural networks employed in our setting are not supposed to make any generalization over unseen data. Therefore, our approach does not suffer from over-parametrization of the neural networks. We have demonstrated this in Fig.~\ref{fig:parametrization} by providing the error as a function of the depth of the neural representation for the 3D car example. The error always decreases by increasing the depth of the neural representation. Therefore, the structure of the neural representations can be selected for having an acceptable accuracy within a given time bound for the training process.
\begin{figure}
	\centering
	\begin{tabular}{lll}
		\includegraphics[scale=.23]{./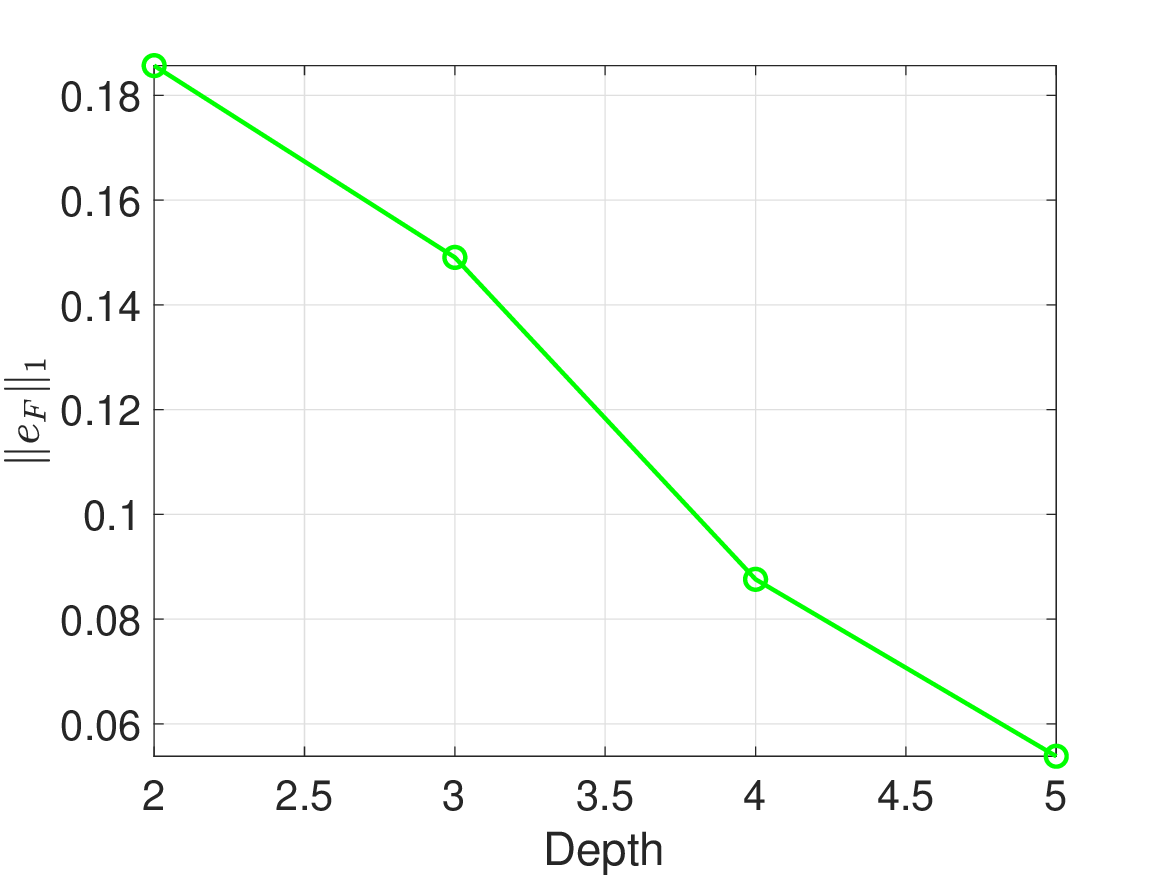}
		&
		\includegraphics[scale=.23]{./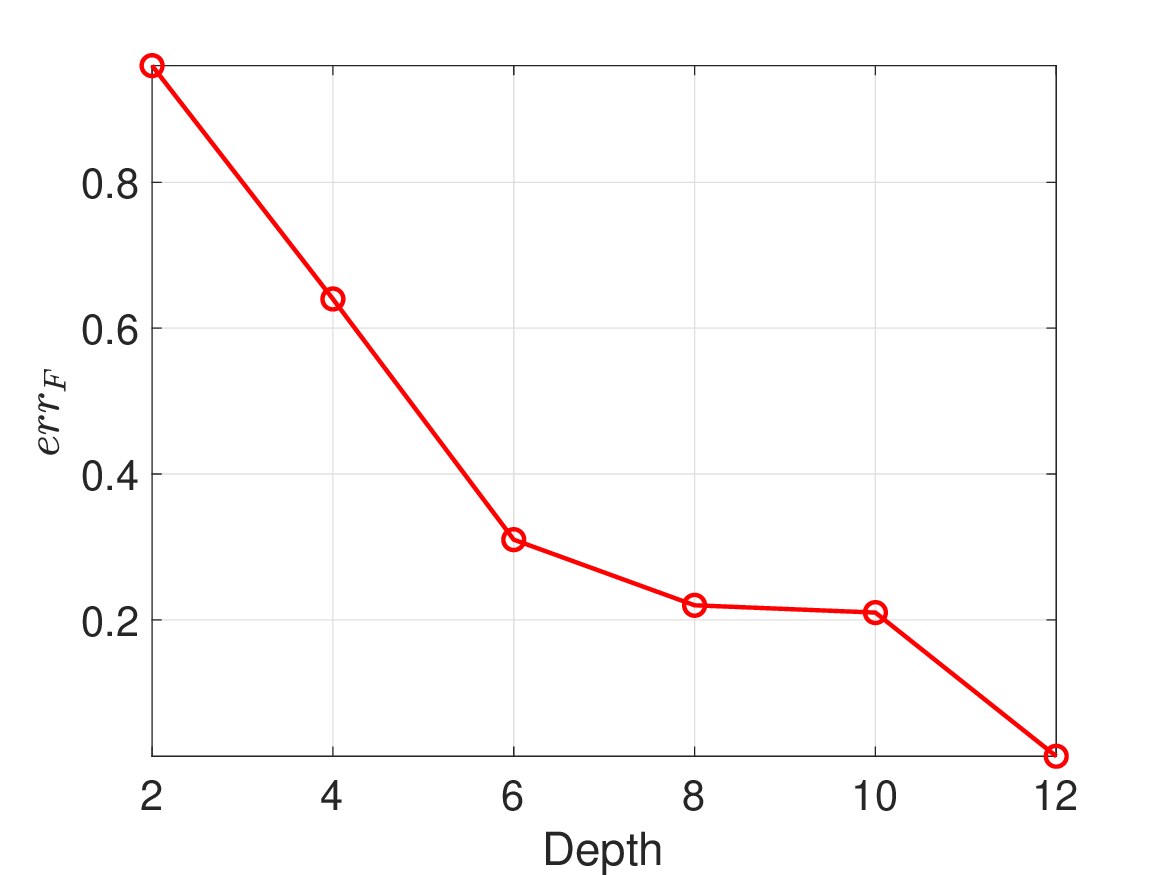}&\includegraphics[scale=.23]{./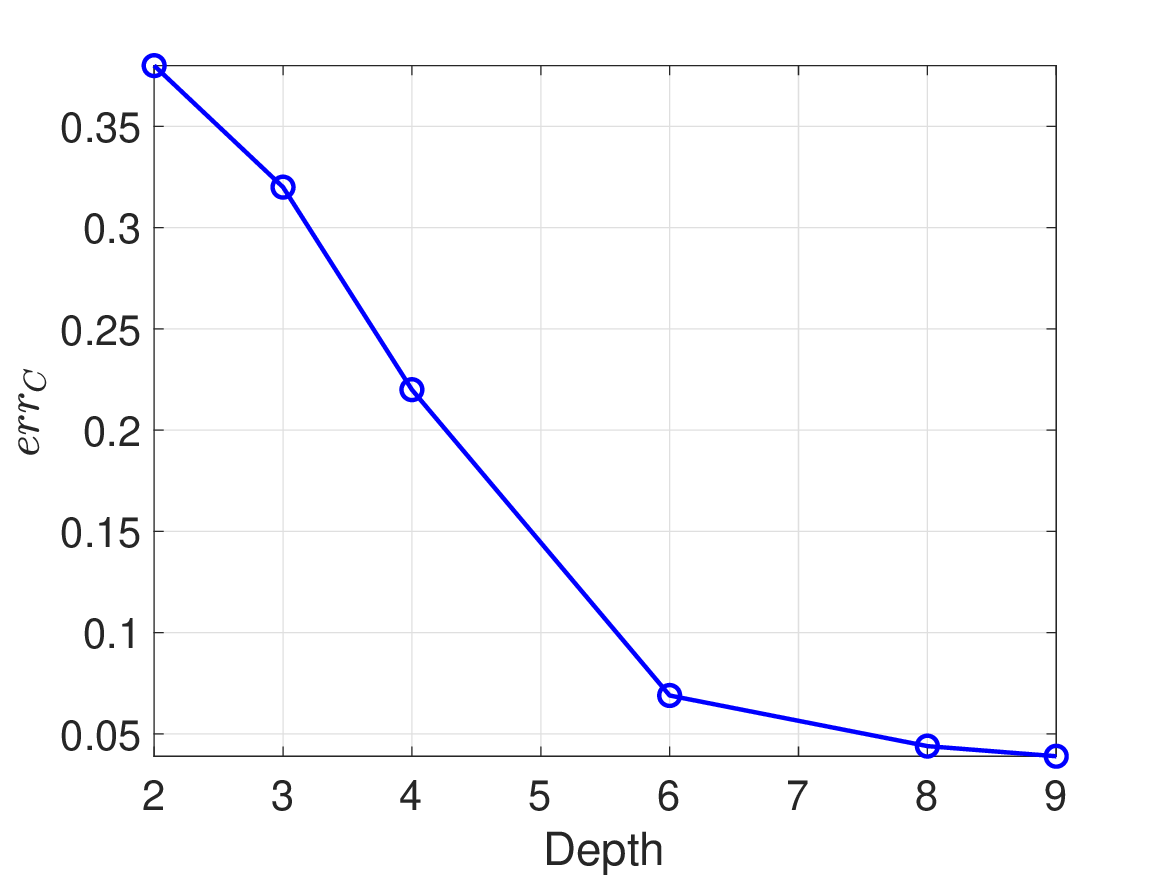}
	\end{tabular}
	\caption{Demonstrating the effect of increasing the depth of the neural representation on the norm of the soundness error $\eFb$ (cf. Eq.~\eqref{eq:FW_soundness_err}) for regression-based controller synthesis (Left), the soundness error $err_F$ (cf. Eq.~\eqref{eq:cls_soundness_errors}) for classification-based controller synthesis (Middle), and the misclassification rate (cf. Eq.~\eqref{eq:err_C}) for deployment (Right). The experiments are performed on the 3D car example.}% Right: Memory needed for storing controllers over multi-dimensional case studies.
	\label{fig:parametrization}
	\vspace{-.3cm}
\end{figure}
% !TEX root = main.tex
\section{Discussion and Conclusions}\label{sec:conclusion}
In this paper, we
considered abstraction-based methods for controller synthesis to satisfy high-level temporal requirements. We addressed the (exponentially large) memory requirements of these methods in both synthesis and deployment.
Using the expressive power of neural networks,
we proposed memory-efficient methods to compute compressed representations for both forward and backward dynamics of the system. With focus on reach-avoid specifications, we showed how to perform synthesis using corrected neural representations of the system.
%
 %In the synthesis phase, we use the trained representations to solve the given instance of reach-avoid problem.
 We also proposed a novel formulation for finding compact corrected neural representations of the controller to reduce the memory requirements in deploying the controller. Finally, we evaluated our approach on multiple case studies, showing reduction in memory requirements by orders of magnitude, while providing formal correctness guarantees. %All of the graphs correspond to the 3D car example (cf. Tab.~\ref{tab:case_studies}). 
 %\Sadegh{Do we want to include any discussion or Future work?}
 
%\begin{comment}
\smallskip
\noindent\textbf{Extension to more general specifications.}\
Our approach is based on computing an under-approximation of $\Pre$ and over-approximation of $\Post$ operators. Therefore, it can be applied to any synthesis problem whose solution is characterized based on these operators. This means our approach can be applied to control synthesis for other linear temporal logic specifications including safety, B\"uchi, and Rabin objectives.
%one could take the representations for the transition systems corresponding to the forward and backward dynamics and use them to solve games such as safety, Buchi, etc.

\smallskip
\noindent\textbf{Reusability of the computed representations.}\
 %As we mentioned earlier, computing the nominal trajectory is not essential to our method and only gives us some levels of speed-up. Given sufficient time, one might be actually interested in 
 Our approach computes the corrected neural representations that is sound on the whole state space. These representations can be used for any other problem defined over the same finite abstraction.

\smallskip
\noindent\textbf{Application to systems with known analytical model.}\
Our approach is efficient in providing compact representations for a given finite abstraction at the cost of increasing the off-line computational time. This is regardless of constructing the finite abstraction using model-based methods or (correct) data-driven methods. Model-based on-the-fly synthesis methods will utilize numerical solutions of differential equations when the analytical model of the system is known with available bounds on the continuity properties of the system. These methods may perform better in case solving the corresponding differential equations is faster than making a forward pass through the neural representation.

\noindent
\textbf{Comparison with a baseline method.}
We have demonstrated the effectiveness of our method on a number of case studies in compressing finite transition systems and controllers which are stored in the form of look-up tables. In the introduction and related work sections of our paper, we have discussed why other methods cannot be used to solve our problem. In below, we have listed our main arguments.
\begin{itemize}
	\item While transition systems and controllers can be encoded using BDDs instead of look-up tables, the memory blow-up problem still exists for systems of higher dimensions. However, using our technique, we empirically show that the size of the computed representations is not necessarily affected by size of the original mapping. See for example Fig.~\ref{fig:results} (Left), wherein the memory required by the trained compressed representation stays at 488 kB, despite the fact that the required memory by the original transition system has increased by a factor of 5000.
	\item Also, our synthesis setting is different from the one considered in references \cite{Reissig:2022,Girard:2022,Rungger:2022}, wherein memory-efficient synthesis methods are proposed based on a (compact) analytical description of the nominal dynamics of the system and its growth bound. We consider the case wherein the input is a huge finite transition system which can also be learned from simulations. 
	\item Finally, while the control determinization and compression schemes proposed in \cite{Girard:2022, ZAPREEV:2018} are based on the BDD and ADD encodings of the controller, the only methodologically that is in a similar spirit as our deployment approach is the symbolic regression of \cite{ZAPREEV:2018}. As mentioned by the authors of \cite{ZAPREEV:2018}, their regression-based method is not able to represent the original controller with an acceptable accuracy. Our superior performance is mainly because of our classification-based formulation, as opposed to a regression-based formulation.
\end{itemize}

\smallskip
\noindent\textbf{Utilizing invertible neural networks.} Our method requires training two different neural networks associated with the forward and backward dynamics. A possible future research direction would be to use invertible neural networks instead of training two separate neural networks.
However, given the specific application and inherent differences between our approach and the successful experiences with invertible neural networks, it is currently not obvious to us that the same performance would be accessible.
%However, given the specific application and inherent differences compared to the successful experiences with invertible neural networks, it is not immediately obvious if such substitution works in practice.

\smallskip
\noindent\textbf{Choice of hyper-rectangular reachable sets.}
Hyper-rectangular sets are the most popular choice for representing reachable sets in abstraction-based controller synthesis (see e.g., \cite{reissig2016feedback}). Although other templates could also be used to represent reachable sets, the conservativeness of the over-approximation reduces as the size of the discretization decreases. Our approach can in principle be applied to any parametrization of the reachable sets.

\begin{comment}
\noindent\textbf{Robustness of the deployable controller}:
The controllers synthesized by abstraction-based methods are discontinuous in general but certify satisfaction of the specification under the modelled disturbances. Therefore, robustness of the closed-loop system to the set of disturbances is already encoded in the computation of the controller. Our compression approach takes the discontinuous controller as a lookup table and tries to assign neighboring states to control inputs with similar values such that the continuity index is improved. In particular, applying Eq. (17) does not result in a worse continuity index despite the different rules applied for the two categories of states.
\end{comment}

\smallskip
\noindent\textbf{Robustness against adversarial examples.}
In general, it is true that neural networks are not robust to adversarial examples, meaning that a small change in their input could give large enough changes in their output resulting in errors. We emphasize that going from a look-up table representation of the controller to a neural network representation does not influence the robustness against adversarial attacks unless the attacker gains access to middle layers of the neural network. Any adversarial attack on the inputs of the neural network can be studied using similar techniques and concepts from the robustness analysis of abstraction-based methods and is independent of the controller representation.

\begin{comment}\textcolor{blue}{\smallskip
	\noindent\textbf{Comparison with BDD.}\
	While transition systems and controllers can be encoded using BDDs instead of look-up tables, the memory blow-up problem still exists for systems of higher dimensions. We empirically show that the size of our computed representations do not necessarily grow by dimension of the state space.}\end{comment}
%Although we depicted our synthesis algorithm only for systems for which we do not know the analytical model of $\Sigma$, we still can compute the corrected neural representations, which are sound with respect to the original forward and backward dynamics.  However, for systems with smooth non-linearities and small sampling times, finding solution to a differential equation usually takes less time than doing a forward pass through a neural network which can provide an accurate over-approximation of the targeted dynamics. 

 %over-approximate the reachable sets for the original system. For a given confidence level $\gamma\in(0,1)$, one can use the existing data-driven methods to compute a disc representation over-approximating the reachable set for every cell in the finite-state abstraction.
%To that end, it suffices to take a number of samples from every cell and compute run the system to find the corresponding destination for every control input. The number of samples is 
%found by using a PAC-type argument.
%The final probabilistic guarantee states that the data-driven computed transition system is correct with the confidence at least as high as $\gamma$.

%\end{comment}
\bibliographystyle{ACM-Reference-Format}
\bibliography{references}
\newpage
%\appendix

%\input{motiv_ex}
%\input{catalog}

\end{document}